\documentclass[11pt, letter]{article}
\usepackage{amsthm}
\usepackage[margin =1in]{geometry}
\usepackage{palatino}
\usepackage{xcolor}
\usepackage[flushleft]{threeparttable}
\usepackage[width=\textwidth]{caption}
\usepackage{subcaption}
\usepackage[pdftex]{graphicx}
\usepackage[skip=0pt]{caption}
\usepackage{setspace}
\usepackage{multirow}
\usepackage{multibib}
\usepackage[multiple]{footmisc}
\usepackage[round]{natbib}
\usepackage{verbatim}
\usepackage[T1]{fontenc}
\usepackage[utf8x]{inputenc}
\usepackage[pdftex]{graphicx}
\usepackage{hologo}
\usepackage{textcomp}
\usepackage{hanging}
\usepackage{amsmath}
\usepackage{amssymb}
\usepackage{amsfonts}
\usepackage{color}
\usepackage{tikz}
\usepackage{float}
\usepackage{epigraph}
\usepackage{authblk}
\usepackage{tabularx}
\usepackage{xr-hyper}
\usepackage{xurl}
\usepackage[hidelinks]{hyperref}
\usepackage{cleveref}

\newtheorem{theorem}{Theorem}
\newtheorem{lemma}{Lemma}
\newtheorem{proposition}{Proposition}

\newtheorem{assumption}{Assumption}
\allowdisplaybreaks

\begin{document}

\title{\Large AI Information Synthesis and Content Polarization}

    \author{Chang Liu\thanks{Lecturer in Economics, UNSW Business School, chang.liu36@unsw.edu.au.}\hspace{2cm}
Yuting Zhu\thanks{Assistant Professor of Marketing, National University of Singapore, NUS Business School, y.zhu@nus.edu.sg.} }
\footnotetext[1]{The authors are listed alphabetically and contributed equally to this work. The authors thank Benjamin Golub, Shengwu Li, Eric Maskin, and Tomasz Strzalecki, as well as seminar participants at Hong Kong Baptist University and audiences at the 2026 CCF Annual Conference on Computational Economics, for helpful comments.}

\date{\today}

\maketitle

\begin{abstract}
\singlespacing
\noindent
Artificial intelligence (AI) can synthesize dispersed information into balanced and informative content, raising the possibility that AI may help reduce polarization. We show that this promise can be undermined by the strategic response of content producers. We develop a game-theoretic model in which consumers acquire information to make downstream decisions, content producers allocate limited informational capacity between broadly valuable vertical information and horizontally differentiated information, and AI provides an alternative source of synthesized information. The effects of AI unfold in three phases as its reliability increases. When reliability is low, AI has no effect on the market. At intermediate levels, AI disciplines prices without attracting consumers or changing content. When AI becomes sufficiently reliable, consumers begin to substitute toward it, and content producers respond by shifting informational capacity away from the vertical dimension and toward horizontally differentiated content, thereby increasing content polarization. The mechanism operates through audience composition: AI attracts consumers who are less well matched to individual producers, leaving producers with stronger incentives to cater to their remaining audiences. The polarization effect intensifies with AI reliability but weakens with greater competition among content producers. Our findings highlight a potential wedge between AI's direct and equilibrium effects: a technology that provides consumers with more balanced information can indirectly make the broader content ecosystem more polarized.

\singlespacing
\noindent\emph{\textbf{Keywords}}: Generative AI, Polarization, Content Markets, Media Economics
\end{abstract}

\newpage

\doublespacing

\section{Introduction} \label{sec:intro}
Polarization has become one of the central challenges in today's information environment, and content markets sit at the center of this concern because they shape what information is produced, how it is framed, and how consumers consume it. Economic and marketing analysis suggests that polarized content is not simply an accidental byproduct of digital technology, but can arise as an endogenous equilibrium outcome of market forces (e.g., \citealt{marketfornews, xiangsarvary2007, gentzkowshapiro2010, peregoyukesel2022}). Mitigating polarization has accordingly become a shared concern across disciplines.

Recent advances in artificial intelligence (AI) have prompted computer scientists to propose a new class of solutions for reducing polarization. Central to these proposals is AI’s information-synthesis capability: large language models (LLMs) can process information and viewpoints dispersed across many sources and distill them into a coherent, potentially more balanced representation \citep{ragsynthesis}. This capability can be delivered through a variety of product formats, including standalone chatbots, conversational search, and AI-generated summaries. Google DeepMind’s ``Habermas Machine,'' for example, demonstrates the potential of AI synthesis by using an LLM mediator to synthesize opposing opinions into consensus statements \citep{googlemachine}. Importantly, AI synthesis is increasingly becoming part of consumers’ information choices and competing with content obtained directly from original content providers. \citet{reuters2026dnr} reports that 10\% of consumers globally now use standalone AI chatbots such as ChatGPT, Perplexity, and Gemini for news, up from 7\% in the previous year. Consistent with such substitution, \citet{agarwalsen2026} show in a field experiment that Google AI Overviews reduce outbound organic clicks by 39.8\% and increase zero-click searches by 34.5\%, while \citet{padillaetal2026} find that LLM adoption leads to substantial declines in traditional search and traffic to some downstream content providers. Taken together, these developments suggest that AI synthesis is emerging as an alternative way for consumers to acquire information rather than merely as a tool layered onto existing content consumption. This raises a natural possibility: by providing consumers with synthesized and potentially more balanced information, AI may mitigate the polarizing dynamics of content markets.

This optimism, however, takes a partial-equilibrium view: it emphasizes the balanced information that AI synthesis can provide directly to consumers while holding fixed the strategic choices of content producers.\footnote{In this paper, we abstract from the institutional identity of content producers, whether traditional news publishers, social media creators, independent commentators, or other information providers, in order to focus on the core economic mechanism of interest.} In reality, content producers are strategic: they choose what content to produce and how to position it in response to the competitive environment. The availability of AI-synthesized information as a competing alternative changes this environment and, in turn, reshapes content producers' equilibrium incentives. A natural intuition is that a balanced, high-quality AI alternative should discipline content producers: by offering consumers an attractive source of information, AI may pressure producers to improve their content or adopt less polarized positions in order to remain competitive. Yet the strategic response may not follow this intuition. Faced with stronger competition from AI, content producers may instead have greater incentives to differentiate their offerings, potentially shifting toward more polarized content. The equilibrium effect of AI synthesis can therefore be striking: the same technology that directly provides consumers with more balanced information may indirectly induce strategic content producers to supply more polarized content.\footnote{\citet{zhaoberman2026} also provide empirical evidence that publishers are adjusting the supply of content in response to generative AI along multiple margins, including restricting LLM access and shifting toward richer and more interactive content formats. Our analysis complements this evidence by developing a theoretical framework that endogenizes content producers' editorial choices and characterizes how these choices respond to competition from AI-mediated information.}

In this paper, we examine how strategic content producers respond to the availability of AI-synthesized information and ask whether their equilibrium supply-side response reinforces or undermines AI's promise as a depolarizing technology. We address this question using a game-theoretic model of content competition with microfounded consumer demand for information.\footnote{AI information synthesis is an emerging technology whose institutional form, adoption patterns, and monetization rules remain unsettled. If researchers wait until its polarization effects are fully observable, content strategies and consumption habits may already have become entrenched. Game theory allows us to enter this debate ex ante.} Consumers seek information because they are uncertain about a payoff-relevant event and want to make better downstream decisions. Content producers strategically choose what information to provide by allocating limited informational capacity across different dimensions of the event. One dimension is vertical, capturing information that all consumers value in the same direction, such as a product's objective quality; the remaining dimensions are horizontal, capturing aspects over which consumers differ in their preferences, such as product design or lifestyle fit. This structure endogenizes a key tradeoff in content production: devoting more informational capacity to broadly valuable information necessarily leaves less capacity for horizontally differentiated information, and vice versa. We model AI as an alternative information source whose commonly known reliability captures the instrumental value of AI synthesis to consumers, using a truth-or-noise information structure \citep{lewissappington1994,johnsonmyatt2006}. Consumers choose between AI-synthesized information and strategically produced content, allowing us to study how the availability of AI changes producers' equilibrium incentives over the type of information they supply.

Our main results show that the effects of AI synthesis unfold in three phases as its reliability improves. When AI reliability is low, AI is irrelevant to the market: content producers make the same editorial choices and charge the same prices as they would in its absence. At intermediate levels of reliability, AI disciplines prices without attracting any consumers in equilibrium. Notably, the first beneficiaries are consumers who are best matched to each producer along the horizontal dimension, as the competitive threat from AI forces producers to lower the prices charged to these consumers. Only when AI becomes sufficiently reliable does it begin to draw consumers away from content producers. At this point, AI competition reduces producers' incentives to emphasize the vertical information dimension that is broadly valued across consumers, causing their content to become more horizontally specialized and, in this sense, more polarized. This editorial distortion becomes stronger as AI reliability increases and is more pronounced when the market contains relatively few content producers. By contrast, markets in which consumers place greater weight on the vertical dimension are more insulated from this effect, as AI must reach a higher level of reliability before producers' editorial incentives are distorted.

The mechanism behind this polarization result operates through a change in the audience that content producers compete to serve. In the absence of sufficiently strong AI competition, producers have an incentive to provide broadly valuable vertical information because doing so helps them appeal to consumers beyond those who are most closely aligned with their horizontal orientation. Once AI becomes sufficiently informative, however, it provides these less well-matched consumers with an attractive alternative, allowing producers to retreat toward the consumers who are more naturally aligned with their content. This changes the relative returns to the two dimensions of content production: producers have less incentive to use their limited informational capacity to appeal broadly and greater incentive to tailor their content toward their remaining audience. AI synthesis therefore induces content polarization not by pushing producers to new or more extreme horizontal positions, but by changing the type of information they optimally provide at their existing positions. In this sense, the same technology that directly offers consumers balanced information can indirectly make the remaining producer-generated content more polarized.

We extend the model in three directions. First, we move beyond content production to content consumption and show that, once AI becomes sufficiently reliable to attract consumers, the greater horizontal specialization of producer-generated content is accompanied by greater divergence in consumers' evaluations. Second, we allow AI to personalize its output to individual consumers. The main equilibrium forces remain unchanged: personalization alters the effective informational value of AI and hence the reliability threshold at which it begins to affect the market, but preserves the strategic mechanism through which AI competition can induce more polarized content. Third, we allow AI reliability to depend endogenously on the information supplied by content producers. Our polarization result continues to hold, but a new strategic force emerges: because providing more broadly valuable information also improves the information available to AI, content producers may reduce their vertical emphasis partly to weaken the competing AI alternative. This force can induce greater polarization even before consumers actually switch to AI, as long as AI is sufficiently valuable to constrain producers' pricing.

Our analysis contributes a supply-side perspective to the emerging debate over AI and polarization. While existing discussions naturally emphasize the information that AI provides directly to consumers, we show that AI can also reshape the incentives of the content producers with which it competes, creating a potential wedge between its direct and equilibrium effects: the same technology that provides consumers with more balanced information may induce the remaining producer-generated content to become more polarized. This insight suggests that evaluations of AI synthesis should look beyond the quality, accuracy, and balance of AI output itself and account for its effects on the broader information ecosystem. 

The remainder of the paper is organized as follows. Section \ref{sec:literature} positions the paper within the related literature. Section \ref{sec:model} introduces the model setup, and Section \ref{sec:analysis} characterizes the equilibrium results. Section \ref{sec:extensions} presents relevant extensions, and Section \ref{sec:conclusion} concludes.

\section{Literature Review} \label{sec:literature}
To the best of our knowledge, this paper is the first to examine the link between AI's information-synthesis capability and the polarization of content production. Pre-AI aggregation technologies are well studied in the literature, but their primary role is to organize access to producer-generated content rather than to synthesize information across sources into a new informational product. \citet{mediaaggregators2013} model aggregators as sites that produce no original content and compete for traffic by hyperlinking to the best producer content, with the aggregator's value arising from filtering and redirecting attention. Similarly, \citet{jeonnasr2016} model an aggregator as a homepage that directs consumers, issue by issue, to the highest-quality producer article. \citet{decornieresarvary2023} study a social platform that competes for consumers' limited attention by bundling publisher content with social content, whereas \citet{amaldossdu2023} model an aggregator that displays snippets and unbundles publishers' article packages. In these models, aggregation primarily changes how consumers access and navigate existing content, while the underlying units of information remain producer-authored articles. AI introduces a qualitatively different capability: information synthesis, whereby an AI synthesizer can potentially deliver greater information value than any individual source alone. It can therefore satisfy consumers' information needs without requiring them to consume the underlying sources separately, making AI not merely an intermediary that routes consumers within the content market but a competing information-acquisition channel in its own right.\footnote{Existing AI products, such as search AI overviews and conversational chatbots, do retain links and referrals; our analysis abstracts from this familiar routing role to isolate what is economically new about AI and to understand its equilibrium implications for the content market.}

In terms of our theoretical contribution to the marketing literature, we highlight two features of our model. First, building on \citet{peregoyukesel2022}, we microfound consumers' demand for content through their information-acquisition decisions. Rather than positing an exogenous taste for content, consumers choose among information sources according to the instrumental value of the information each provides. This lets us isolate how the incremental value of information, jointly determined by content's vertical emphasis and horizontal match, disciplines equilibrium prices and editorial strategies in the presence of AI. The closest marketing antecedents are \citet{brancosunmiguel2012,brancosunmiguel2016}, who also derive the demand for information from its decision value but study a monopolist's pricing and information provision; we instead study competing content producers that jointly choose prices and editorial strategies against AI. \citet{sarvaryparker1997} similarly link the value of information to price competition among sellers, microfounding how consumers combine signals of given precision; we microfound the valuation of information itself and endogenize each content's informational composition.

A second feature of our model is that it endogenizes the tradeoff between vertical emphasis and horizontal positioning in content design. Our approach is related to \citet{andersonrenault06} and \citet{mayzlinshin11}, which model communication as subject to a limited informational bandwidth. Similarly, we assume that each piece of content has a limited capacity to convey information, reflecting both production constraints and consumers' limited attention. In this respect, the closest paper to ours is \citet{anthonydukes15}, who also endogenize a tradeoff between vertical emphasis and content that stresses horizontal match. We differ in separating vertical emphasis from horizontal positioning: in our model, these are distinct choices, so a content producer that reduces its emphasis on quality need not adjust where it positions itself horizontally.

Our paper also contributes to the growing theoretical literature on the economics of AI. Existing work has represented AI through its ability to personalize recommendations \citep{caoetal2024}, improve prediction and matching \citep{choietal2024}, or enhance content-production quality \citep{zouetal2026}. More recent work has begun to examine the informational functions of generative AI: \citet{dreyfusshoong2026} represent AI as an informative signal that assists human decision making, and \citet{keetal2026} characterize generative AI as a statistical inference engine. We follow this literature's reduced-form modeling tradition and focus on a particular economic role of AI synthesis: AI-synthesized information serves as an alternative source of information for consumers, competing with content supplied by strategic content producers.

Finally, our model speaks to the literature on content polarization. Prior research has operationalized polarization differently depending on the object of interest. \citet{xiangsarvary2007} focus on media bias, measured by the expected deviation of reported messages from the underlying truth, whereas \citet{bermankatona2020} study consumption polarization, measured by the ideological distance between consumers and the content they consume. Our notion is closer to the supply-side polarization studied in \citet{galorgeylaniyildirim2012}, \citet{yildrimgalorgeylani2013}, \citet{bondirafieianyao2026}, where polarization is associated with media outlets adopting more differentiated or extreme ideological positions. Beyond such positional differentiation, our model captures an additional margin of content polarization: producers can shift editorial emphasis away from the common vertical dimension and toward more specialized information tailored to their target audiences.

\section{The Model} \label{sec:model}
We develop a parsimonious model of content production and consumption in which consumers use publicly available information to learn about the underlying state of an event and subsequently make downstream decisions. Such an event may include, for example, the quality and features of a newly released product, the implications of an emerging health trend, or the credibility of a political development. We begin by characterizing the uncertain state about which consumers seek to form beliefs and around which content is produced. We then introduce consumer preferences and describe the strategic choices of content producers, who decide how to produce and position their content. Finally, we incorporate the role of AI and introduce the game sequence. 

\subsection{State of an Event}
The first modeling choice concerns how we represent the underlying state of an event. We model the state of an event as a three-dimensional vector $\omega=\left(\omega_{0}, \omega_{1}, \omega_{2}\right) \in \mathbb{R}^{3}$, whose three components are independently distributed as standard normals. Each component captures a distinct aspect of the event that is potentially relevant to consumers. The first component $\omega_{0}$ represents a vertical dimension that all consumers value in the same direction. For example, in a product context, this might correspond to objective quality attributes that any consumer would care about: durability, safety, or core functionality. In contrast, $\omega_1$ and $\omega_2$ represent horizontal dimensions along which consumers differ in their preferences. Using the same product example, these might correspond to aesthetic design choices, lifestyle alignment, or values-based attributes such as sustainability or brand identity, which resonate differently depending on who the consumer is.

The inclusion of the second horizontal component, $\omega_2$, allows the model to capture a richer form of content positioning. With only one horizontal component, content differentiation would collapse to a line: content producers could differ only in the extent to which they emphasize the vertical dimension relative to the horizontal dimension. By contrast, with two horizontal components, content producers can differ not merely in how much they specialize, but in \textit{which aspects} of the event they choose to emphasize.

To make this distinction concrete, consider content producers covering the launch of a new plant-based food product. One producer may focus on the product's nutritional profile and food-safety certification, information that is broadly relevant to virtually all consumers evaluating the product, corresponding to $\omega_0$. Another producer may instead emphasize the product's environmental sourcing and its fit with a sustainable lifestyle, corresponding to $\omega_1$, while a third may highlight its culinary versatility and flavor innovation for food-enthusiast audiences, corresponding to $\omega_2$. All three content producers cover the same product launch, that is, the same underlying state $\omega$, but they illuminate different aspects of that state. A consumer who primarily cares about sustainability learns substantially more from the second producer than from the first or third, while a consumer who primarily cares about culinary use cases may exhibit the opposite pattern. This form of content differentiation cannot be captured within a single horizontal dimension framework.

\subsection{Consumers}
There is a unit mass of consumers, and each consumer $i$ is characterized by a type vector $\theta_i \in \mathbb{R}^3$ that determines how she values the three components of the event state $\omega = (\omega_0, \omega_1, \omega_2)$. We adopt a linear specification in which the utility consumer $i$ derives from the event state $\omega$ is (\citealt{andersonPalmaThisse1992}):
\begin{equation}\label{eq:consumerutility}
    u(\omega, \theta_i) = \theta_{i,0}\omega_0 + \theta_{i,1}\omega_1 + \theta_{i,2}\omega_2.
\end{equation}

To obtain a tractable yet economically meaningful preference space, we specify consumer $i$'s type as $\theta_i=(\gamma,\cos t_i,\sin t_i)$, where $\gamma > 0$ and $t_i\in[-\pi,\pi]$ is independently drawn from the uniform distribution on this interval.

This parameterization encodes two substantive assumptions. First, the parameter $\gamma>0$ captures the relative importance of the vertical dimension compared with the horizontal dimensions in consumers' preferences. We model $\gamma$ as common across consumers and interpret it as a population-level taste for quality information: it summarizes how much the market as a whole rewards vertical content relative to horizontally differentiated content. This modeling choice gives us analytical tractability in studying how policy-relevant shifts in consumers' aggregate concern for quality affect equilibrium outcomes. Second, consumers differ in how they weight the two horizontal dimensions $(\omega_1,\omega_2)$, and this heterogeneity is summarized by a single parameter $t_i$. The pair $(\cos t_i,\sin t_i)$ places each consumer on the unit circle, in the spirit of circular spatial models such as \cite{salop1979}. Importantly, the circular geometry of the preference space is not imposed as an independent topological assumption. Rather, it emerges naturally from the two-component structure of the horizontal dimensions of $\omega$ together with the unit-norm restriction on horizontal preferences.

\subsection{Content Producers}
We next specify how content producers generate content. There are $N$ content producers, and each producer $n$ chooses an editorial strategy vector $b_n=(b_{n0},b_{n1},b_{n2})\in\mathbb{R}^3$,
subject to the capacity constraint $\|b_n\|\leq1$. The vector $b_n$ determines how the producer allocates its limited informational capacity across the three components of the underlying event state $\omega=(\omega_0,\omega_1,\omega_2)$. Shifting capacity toward a given component makes the producer's content more informative about that aspect of the event.

The capacity constraint reflects limitations on both the production and consumption sides of the content market. From the production side, the constraint $\|b_n\|\leq 1$ captures the idea that producing informative content requires scarce resources, including editorial attention, expertise, investigative effort, production time, screen space, and cognitive focus. From the consumption side, the constraint reflects the limited attention and cognitive processing capacity of consumers. Although AI can substantially reduce the cost of generating content, it does not eliminate the limited amount of information that can be effectively communicated and absorbed through a single content piece. Beyond a certain point, additional information overwhelms consumers and becomes less effective. Consequently, from the perspective of both the production and the consumption side, content producers must decide which aspects of an event to emphasize. Increasing the informativeness of content about one component necessarily reduces the attention and informational bandwidth available for the remaining components. 

If consumer $i$ consumes content from producer $n$, the information takeaway is given by the noisy signal
$$s_{in}(\omega,b_n)=b_n\cdot\omega+\varepsilon_{in},$$
where the errors $\varepsilon_{in}$ are independent across consumers and content producers, independent of $\omega$, and distributed as standard normals. The noise term captures imperfections in content production, interpretation, and consumption. Even when a producer emphasizes a particular aspect of the event, the consumer does not observe that component perfectly. 

We'd like to note that although each consumer observes a scalar signal, the signal should be interpreted as a bundled piece of content rather than as information about a single component of the event state. That is, a content producer’s content strategy $b_n$ determines the direction in the multidimensional state space along which its content is informative. Upon observing $s_{in}$, consumers update beliefs about all components of $\omega$, but only through the weighted index emphasized by the producer. This formulation captures the idea that content creators do not provide separate, independent measurements of every aspect of an event; rather, they synthesize selected aspects into a narrative. 

In equilibrium, the capacity constraint binds to maximize the value of information, so $\|b_n\|=1$. Without loss of generality, we normalize the sign of $b_n$ so that $b_{n0}\geq0$.\footnote{Because the noise $\varepsilon_{in}$ is symmetric about zero, $s_{in}(\omega,-b_n)$ has the same joint distribution with $\omega$ as $-s_{in}(\omega,b_n)$, an invertible relabeling of $s_{in}(\omega,b_n)$. Thus, $b_n$ and $-b_n$ are information-equivalent: they yield the same value of information for every consumer and hence the same prices and payoffs. If $b_{n0}<0$, the normalization therefore replaces $b_n$ with $-b_n$, reversing the sign of every component rather than only the first.} Given $\|b_n\|=1$ and this normalization, we can represent the producer's strategy in polar coordinates. Specifically, we write $b_n= \left(\sqrt{x_n},\sqrt{1-x_n}\cos t_n,\sqrt{1-x_n}\sin t_n\right)$, where $x_n\in[0,1]$ and $t_n\in[-\pi,\pi]$. This transformation gives the content strategy a clear economic interpretation. The scalar $x_n$ measures the share of the producer's information capacity allocated to the vertical component $\omega_0$. The parameter $t_n$ captures the producer's target direction in the horizontal information space. It identifies the mixture of $\omega_1$ and $\omega_2$ that the producer emphasizes. We refer to the pair $(x_n,t_n)$ as content producer $n$'s content strategy. 

\begin{figure}[ht]
\begin{center}
\caption{\centering Illustration of the Content Strategy}
\includegraphics[width=0.6\textwidth]{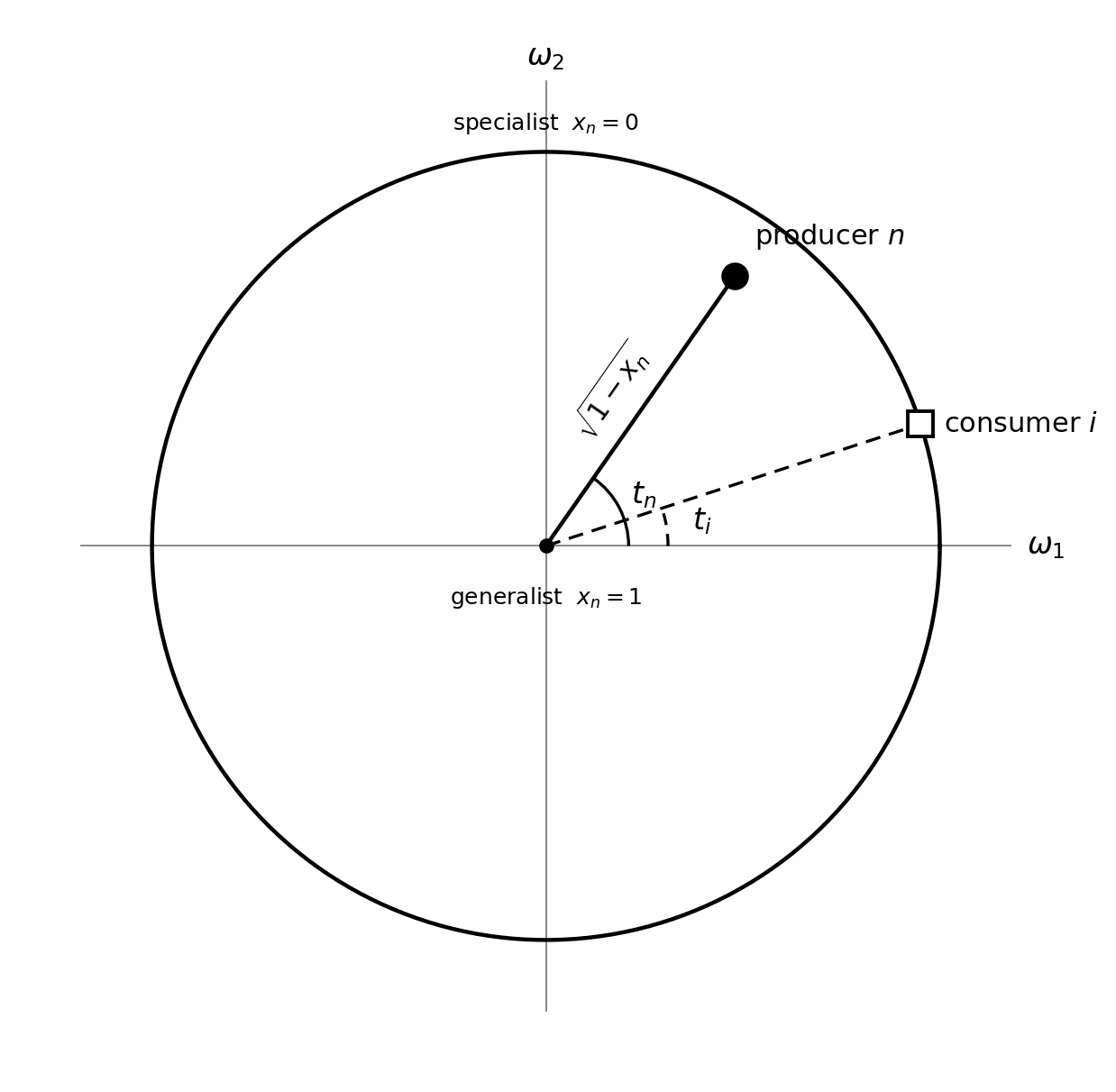}
\label{fig:unitcircle}
\end{center}
\end{figure}

Figure~\ref{fig:unitcircle} illustrates the geometry of our content strategy. Because the horizontal block of $b_n$ has norm $\sqrt{1-x_n}$, each producer can be represented as a single point \textit{inside} the unit disk. Its angular coordinate is the target direction $t_n$, and its distance from the center is $\sqrt{1-x_n}$. The square of this radius, $1-x_n$, is the share of content-production capacity the producer allocates to the horizontal dimensions. The center of the disk corresponds to a producer that devotes all of its capacity to the vertical component $\omega_0$ ($x_n = 1$); the rim corresponds to a producer that devotes all of its capacity to the horizontal dimension ($x_n = 0$). As $x_n$ falls, the producer moves outward from the center toward the rim, shifting capacity away from the vertical component and toward a sharper horizontal target. A consumer, by contrast, is represented as a point on the rim: her type $\theta_i = (\gamma, \cos t_i, \sin t_i)$ fixes her location $t_i$ on the preference circle. A producer with target $t_n$ creates content that is most aligned with consumers whose horizontal preference vector points in the same direction, so the alignment between producer $n$ and consumer $i$ is governed by the angular gap between $t_n$ and $t_i$. 

This circular geometry arises in the spirit of \cite{salop1979}, but the object a content producer chooses is richer than a firm's location in that model: whereas a Salop firm chooses only a point \textit{on} the circle, a producer here chooses a point \textit{in the interior} of the disk, characterized both by the target direction $t_n$ and by the radius $\sqrt{1-x_n}$. The target direction plays a role analogous to a Salop location. However, unlike a standard Salop framework in which firms differ only in their horizontal position, our model introduces an additional strategic margin through the radius of the producer’s location. Thus, our framework endogenizes the tradeoff between broadly valuable content and horizontally differentiated content, whereas previous spatial models typically impose this tradeoff exogenously.

\subsection{Role of AI}\label{subsec:aiaggregatormodel}
We next describe how we model AI information synthesis. In practice, AI search and related LLM-based systems retrieve information from a large collection of dispersed sources and synthesize it into a unified response to a user's query (e.g., \citealt{jdaisearch, alibabaaisearch, xiaohongshuaisearch, zhumeituan2026}). Our interest is not in the specific computational process through which this synthesis occurs, but in its economic consequence: the instrumental value of the information that AI ultimately provides to consumers. Because consumers use information to learn about the underlying event, we capture this instrumental value by how well the AI output helps consumers recover the underlying state $\omega$.

Specifically, we model AI as a reduced-form information technology using a truth-or-noise information structure \citep{lewissappington1994, johnsonmyatt2006}, characterized by a single reliability parameter $\alpha\in[0,1]$. With probability $\alpha$, AI successfully synthesizes the dispersed information environment and recovers the underlying state of the event $\omega$. With probability $1-\alpha$, synthesis fails and AI produces an uninformative output, modeled as an independent draw from the prior distribution of $\omega$. Consumers observe the AI output but not whether the synthesis process has succeeded.\footnote{This assumption can be interpreted as abstracting from costly verification by consumers. Consumers understand that AI is imperfect and know its reliability $\alpha$, but cannot costlessly determine whether a particular output is correct. Allowing consumers to verify AI-generated information at a cost would be a natural extension.} We assume that $\alpha$ is common knowledge among consumers and content producers. Thus, a higher $\alpha$ represents greater reliability of AI information synthesis and, consequently, greater instrumental information value provided by AI.

Two comments on our modeling approach are worth noting. First, our specification deliberately abstracts from the \emph{input side} of AI information synthesis and models the informativeness of its \emph{output} directly. In reality, an AI system retrieves information from multiple content sources and combines the signals contained in those sources to generate a response. One could therefore model this process explicitly by specifying which sources AI retrieves, how it weights their information, and how it reconciles conflicting signals.\footnote{Specifically, we can explicitly model AI synthesis as a mapping from the signals contained in individual content sources, for example, $s_{AI}=G(s_{i1},\ldots,s_{iN})$. Such an approach would require additional assumptions about source selection, weighting, and the treatment of conflicting signals. Our truth-or-noise specification instead describes AI's output directly, rather than deriving it from a particular synthesis rule applied to the producer signals.} We abstract from these details because our primary interest is not in the internal synthesis technology itself, but in how the information ultimately provided by AI affects consumers' information acquisition and content producers' strategic responses. Summarizing the outcome of this process with the single reliability parameter $\alpha$ allows us to transparently characterize how market equilibrium changes as AI's information capability improves.

Second, our specification allows AI to provide greater instrumental information value than any individual content source. As $\alpha$ approaches one, AI becomes increasingly effective at synthesizing dispersed information and recovering the underlying state. Emerging empirical evidence is consistent with this modeling approach. \citet{luo2025} find that LLMs outperform neuroscience experts in predicting experimental outcomes, with evidence that this advantage stems in part from their ability to integrate information across contexts. Similarly, \citet{asai2026} show that a retrieval-augmented LLM system that synthesizes evidence from a large scientific corpus can outperform expert-written responses.

We also consider two extensions to our modeling of AI information synthesis. First, the baseline model captures AI's potential to synthesize dispersed information into a highly informative output but assumes that the synthesis technology is common across consumers. Section~\ref{subsec:personalizedAI} relaxes this assumption by allowing AI to tailor its synthesis to individual consumers. Second, Section~\ref{subsec:endogenousAI} relaxes the exogeneity of AI reliability by allowing AI's information capability to depend explicitly on content producers' inputs, enabling us to examine the robustness of our main results when AI synthesis is linked to the underlying content it draws upon.

\subsection{Game Sequence and Equilibrium Concept}
The timing of the game is as follows. In the first stage, each content producer $n$ simultaneously chooses a content strategy $(x_n,t_n)$. In the second stage, consumer types $(t_i)$ and producers' content strategies become public information. After observing consumer types and the content strategies chosen by rival producers, each content producer $n$ sets a price $p_n(t_i)$. Content producers' prices are nonnegative, $p_n(t_i)\geq 0$, and access to AI-generated information is free. In the third stage, each consumer chooses at most one information source, either a single content producer or the AI, pays the corresponding price, and then observes the resulting information. If consumer $i$ chooses content producer $n$, she observes the signal $s_{in}(\omega,b_n)$. If instead she chooses the AI, she observes a realization about the event state governed by parameter $\alpha$. After observing the information and updating their beliefs, consumers make a downstream decision, such as whether to purchase a newly released product, adopt a new technology, or follow a health recommendation. We model this stage in reduced form by assuming that the consumer either takes the action or abstains. If the consumer takes the action, the payoff is $u(\omega,\theta_i)$, as defined in Equation~\eqref{eq:consumerutility}; otherwise, the payoff is zero. 

Several features of the game structure deserve comment. First, we model AI not as a strategic player, but as a new informational option available to consumers. By treating AI as a parameterized information technology, we isolate the economic question at the center of our analysis: how the availability of AI-mediated information synthesis reshapes consumers' choices and content producers' incentives, without introducing the separate question of how an AI provider would behave strategically. Second, content strategies are chosen before producers observe consumer types. The strategies $(x_n, t_n)$ should therefore be interpreted as long-run editorial commitments, that is, what an outlet investigates, and how it frames its coverage, and cannot be costlessly adjusted to the realized composition of demand. Third, our pricing stage connects to the literature on the pricing of information (\citealt{sellingcookies, designpriceinformation}). Permitting prices to condition on consumer types is a parsimonious way to capture the role of targeted advertising. Finally, we abstract from the details of the downstream decision and instead focus on the role of the information environment in shaping that decision. Whether the action is purchasing, adopting, or following, the common economic force is that consumers acquire information because it helps them make decisions whose payoffs depend on the underlying state of the world. Under this formulation, the value of information arises because better information improves the consumer's ability to identify situations in which taking the action is beneficial. 

The solution concept is a symmetric Perfect Bayesian Equilibrium (PBE). A symmetric PBE consists of (i) an editorial strategy $(x_n^*,t_n^*)$ and a pricing strategy $p_n^*(t_i)$ for each content producer $n=1,\ldots,N$; (ii) an information-source choice and a downstream action strategy for each consumer of type $t_i$; and (iii) a system of posterior beliefs that are updated according to Bayes' rule whenever possible. In equilibrium, each content producer chooses $(x_n,t_n)$ and $p_n(t_i)$ to maximize expected profits, taking the strategies of all other producers as given. Each consumer chooses the information source that maximizes her expected utility and, after observing the corresponding information, chooses the optimal downstream action. We focus on symmetric equilibria in which all content producers choose the same degree of focus on the vertical component, i.e., $x_n^*=x^*$ for all $n$, and the target directions $t_n^*$ are equally spaced over the horizontal space. Consumers follow identical decision rules conditional on their types. We solve the game by backward induction.

\section{Equilibrium Analysis} \label{sec:analysis}
We begin by characterizing consumers' decisions. We then derive content producers' equilibrium pricing and content production strategies. Finally, using the equilibrium characterization, we examine how the introduction of AI reshapes content production.

\subsection{Consumer Decisions}\label{subsec:consumeranalysis}
We begin with consumers' final-stage decisions. After choosing an information source, consumer $i$ observes the corresponding information output. If the consumer chooses content producer $n$, a noisy signal $s_{in}$ is observed. If the consumer chooses AI, an AI-generated synthesized content is observed, denoted by $a_i^{AI}$. Based on the observed information, consumer $i$ updates her belief about the payoff from taking the downstream action.

Let $\mathcal{I}_{iq}$ denote the information available to consumer $i$ after choosing source $q\in\{1,\ldots,N,AI\}$. Since not taking the action yields a normalized payoff of zero, consumer $i$ takes the downstream action if and only if
$$\mathbb{E}\left[u(\omega,\theta_i)\mid \mathcal{I}_{iq}\right]=\mathbb{E}\left[\theta_i\cdot\omega\mid \mathcal{I}_{iq}\right]>0.$$
When this posterior expectation equals zero, the consumer is indifferent, and we assume that she abstains. Before acquiring any information, the consumer's expected utility from taking the action is
$$\mathbb{E}[u(\omega,\theta_i)] = \mathbb{E}[\theta_i\cdot\omega] =0,$$
because each component of $\omega$ is independently distributed as a standard normal random variable. 

We next characterize consumers' information-source choices. In equilibrium, a consumer of type $t_i$ selects the information source that maximizes the value of information net of the price charged. Therefore, characterizing consumers' information acquisition decisions reduces to deriving the value of information associated with each source.

Let $v_{in}$ denote the value of information that content producer $n$ provides to consumer $i$, and let $v_{i,AI}$ denote the value of information provided by AI. Formally, the instrumental value of information is the ex ante increase in expected utility from making the downstream decision after observing the information provided by a source, relative to making the same decision without acquiring any information. The following lemma characterizes the value of information delivered by each source.

\begin{lemma}\label{lm:valueofinformation}
Define $\lambda = 1/(2\sqrt{\pi})$. The value of information that consumer $i$ obtains from content producer $n$ is
\begin{equation}\label{eqn:contentproducervoi}
    v_{in} = \lambda \left|\gamma\sqrt{x_n} + \sqrt{1-x_n}\cos(t_i-t_n) \right|.
\end{equation}
The value of information provided by AI is
\begin{equation}\label{eqn:aivoi}
    v_{i, AI} = v_{AI} = \alpha\lambda\sqrt{2(\gamma^2+1)}.
\end{equation}
\end{lemma}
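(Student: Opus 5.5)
Both formulas come from one Gaussian fact. If, after observing information $\mathcal{I}$, the posterior mean $m=\mathbb{E}[\theta_i\cdot\omega\mid\mathcal{I}]$ is normal with mean zero and variance $\sigma^2$, then the value of information is $\mathbb{E}[\max(m,0)]-0=\sigma/\sqrt{2\pi}$. The prior payoff from acting is zero, so abstaining is optimal ex ante and only the posterior option value counts. I will therefore reduce each case to finding the variance of the relevant posterior mean, or, for AI, to a close variant.

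\emph{Content producers.} Here $(\theta_i\cdot\omega, s_{in})$ is jointly normal. We have $\mathrm{Var}(s_{in})=\|b_n\|^2+1=2$ and $\mathrm{Cov}(\theta_i\cdot\omega,s_{in})=\theta_i\cdot b_n$. The posterior mean is therefore $m=(\theta_i\cdot b_n)s_{in}/2$, with variance $(\theta_i\cdot b_n)^2/2$. Substituting $\theta_i\cdot b_n=\gamma\sqrt{x_n}+\sqrt{1-x_n}(\cos t_i\cos t_n+\sin t_i\sin t_n)=\gamma\sqrt{x_n}+\sqrt{1-x_n}\cos(t_i-t_n)$ gives $v_{in}=|\theta_i\cdot b_n|/(\sqrt2\sqrt{2\pi})=\lambda|\gamma\sqrt{x_n}+\sqrt{1-x_n}\cos(t_i-t_n)|$, since $1/(2\sqrt\pi)=\lambda$.

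\emph{AI.} The output $a$ equals $\omega$ with probability $\alpha$ and otherwise is an independent draw $\omega'$ from the prior. Either way the marginal of $a$ is $N(0,I_3)$, so the posterior is a mixture: with weight $\alpha$ a point mass at $a$, and with weight $1-\alpha$ the prior. This requires checking by Bayes' rule that the likelihood ratio of the two components is exactly $\alpha:(1-\alpha)$. The densities of $a$ coincide under both components, so the posterior weight on success equals the prior weight $\alpha$ for every realization. Hence $\mathbb{E}[\theta_i\cdot\omega\mid a]=\alpha\,\theta_i\cdot a$, because the failure component has mean $0$. Its distribution is $N(0,\alpha^2\|\theta_i\|^2)$ with $\|\theta_i\|^2=\gamma^2+1$. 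The Gaussian fact then yields $\alpha\sqrt{\gamma^2+1}/\sqrt{2\pi}=\alpha\lambda\sqrt{2(\gamma^2+1)}$. This value does not depend on $t_i$, which justifies writing $v_{i,AI}=v_{AI}$.

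The main point needing care is this AI step. The posterior is not Gaussian, so the fact cannot be applied to the posterior itself. What makes it work is that the \emph{posterior mean} is a linear function of a Gaussian vector. That in turn rests on the observation that the success probability is not updated, which holds because the output's marginal law is identical in the success and failure branches. The tie-breaking rule (abstaining at indifference) occurs with probability zero in both cases and does not affect the values.
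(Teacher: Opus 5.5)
Your proof is correct and follows the paper's argument essentially step for step. For producers it uses the normal-projection posterior mean $(\theta_i\cdot b_n)s_{in}/2$ and the half-normal identity; for AI it uses the fact that the output's marginal law is identical under success and failure, so the posterior success weight stays at $\alpha$ and the posterior mean $\alpha\,\theta_i\cdot a_i^{AI}\sim N(0,\alpha^2(\gamma^2+1))$. One minor slip: when $\theta_i\cdot b_n=0$, indifference occurs with probability one rather than zero, but the payoff at indifference is zero either way, so the value is unaffected.
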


Lemma \ref{lm:valueofinformation} provides a simple characterization of the value of information from each source. The value of information from a content producer consists of two components. The first component, $\gamma\sqrt{x_n}$, captures information about the vertical component of the event that is valuable to all consumers regardless of their horizontal taste. The second component, $\sqrt{1-x_n}\cos(t_i-t_n)$, captures information about the horizontal component. Within the readerships considered below, the term inside the absolute value in Equation~\eqref{eqn:contentproducervoi} is positive, so information value is highest at the producer's target direction and decreases with the horizontal distance $|t_i-t_n|$ from it.

In contrast, the value of information provided by AI is independent of consumer type. This reflects our deliberate modeling choice. In our baseline model, we focus on a powerful, broadly accessible AI technology without tailoring its output to individual consumers. By abstracting away from personalization, we isolate how improvements in AI's information synthesis capability affect content producers' strategic content choices. Consequently, as shown in Equation (\ref{eqn:aivoi}), every consumer derives the same expected informational benefit from AI. This value is increasing in both the relative importance of the vertical dimension compared to the horizontal dimension $\gamma$ and the probability of successful synthesis $\alpha$, implying that a more capable AI generates greater expected informational value for all consumers.

With Lemma \ref{lm:valueofinformation}, the consumer's information source choice can be characterized straightforwardly. Given a profile of content strategies and prices, $\left((x_n,t_n),p_n(t_i)\right)_{n=1}^N$, a consumer of type $t_i$ chooses the information source that provides the highest information value net of price. Formally, the consumer solves
\begin{equation}\label{eqn:maxmax}
\max\left\{\max_{n}\left\{v_{in}-p_n(t_i)\right\}, v_{AI}\right\}.
\end{equation}
If several sources attain the maximum in Equation~\eqref{eqn:maxmax}, the consumer selects one with the highest gross information value. Any remaining ties are broken by a fixed priority order over sources that ranks every content producer ahead of AI.

\subsection{Producer Strategies}\label{subsec:produceranalysis}
We next characterize the pricing stage of the game. After observing the profile of all content strategies $\left((x_n,t_n)\right)_{n=1}^N$, content producers simultaneously set type-specific prices. Since prices can be conditioned on consumer type, content producers compete \`a la Bertrand for each potential consumer independently.

Fix a profile of content strategies $\left((x_n,t_n)\right)_{n=1}^N$. For each consumer of type $t_i$, let
$$V_{i,-n} = \max_{m\neq n}v_{im}$$
denote the highest value of information provided by rival content producers. Because consumers may also choose AI, the best outside option available to consumer $i$ when considering content producer $n$ is
$$O_{in} = \max\{V_{i,-n},\,v_{AI}\}.$$

The following lemma characterizes the equilibrium pricing outcome.

\begin{lemma}\label{lem:pricing}
Given a profile of content strategies, the equilibrium price charged by content producer $n$ to consumer $i$ is
$$p_{n}^*(t_i) = \max\{v_{in}-O_{in},\,0\}.$$
Consequently, content producer $n$'s equilibrium profit is
$$\Pi_n = \mathbb{E}_{t_i}\!\left[p_{n}^*(t_i)\right].$$
\end{lemma}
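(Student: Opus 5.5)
Because prices may be conditioned on the consumer's type, the pricing stage separates into independent asymmetric Bertrand games, one for each type $t_i$. The plan is to fix a consumer $i$ and a profile of content strategies, and to show that in this single-consumer game the producer with the highest gross value wins the consumer and charges exactly the gap to her best alternative. All other producers earn nothing from $i$. The profit formula then follows by integrating over $t_i$: the uniform type distribution gives $\Pi_n=\mathbb{E}_{t_i}[p_n^*(t_i)]$, since every non-winning producer's price-weighted demand is zero.

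\textbf{Step 1 (ordering and the tie-break).} Order the $N+1$ gross values $v_{i1},\dots,v_{iN},v_{AI}$ from Lemma~\ref{lm:valueofinformation}. Let $n$ be a producer with $v_{in}> O_{in}$, or with $v_{in}= O_{in}$ and favored by the priority rule. Under the proposed prices $p_n=v_{in}-O_{in}$ and $p_m=0$ for every $m\neq n$, consumer $i$'s net values are
\[
v_{in}-p_n=O_{in}, \qquad v_{im}-p_m=v_{im}\le O_{in}\ \text{for each } m\neq n, \qquad v_{AI}\le O_{in}.
\]
So the consumer is indifferent between $n$ and its best alternative, or strictly prefers $n$. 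The tie-breaking rule then assigns her to $n$, because $n$ has the highest gross value and any remaining tie is resolved by the priority order, which ranks producers ahead of AI. For every $m\ne n$ we have $v_{im}-O_{im}\le 0$, since $O_{im}\ge v_{in}\ge v_{im}$. Hence $\max\{v_{im}-O_{im},0\}=0$, and the proposed prices coincide with the formula in the lemma.

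\textbf{Step 2 (no profitable deviation).} For $n$: raising the price above $v_{in}-O_{in}$ pushes the consumer's net value below $O_{in}$, so she leaves for a rival or for AI and $n$ earns $0$. Lowering the price keeps the sale at a smaller margin. For a losing producer $m$: to attract the consumer it would need $v_{im}-p_m\ge v_{in}-p_n=O_{in}\ge v_{im}$. With $p_m\ge 0$, this forces $p_m=0$ and therefore zero revenue, so $m$ has no profitable deviation.

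\textbf{Step 3 (uniqueness, the main obstacle).} Two points need care. The first is the degenerate case in which two producers tie at the top value. Then $O_{in}=v_{in}$, and both prices are pinned at $0$ by the usual Bertrand undercutting argument, which matches the formula. The second is the usual Bertrand multiplicity, where a losing producer could post any price at or above its zero-price level without affecting the outcome. I would resolve this by adopting the standard refinement: either rule out weakly dominated prices below cost, or simply note that a loser's price is irrelevant to realized payoffs. Under that refinement the losers' equilibrium price can be taken to be $\max\{v_{im}-O_{im},0\}=0$.

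Finally, I would show that the winner cannot charge anything other than $v_{in}-O_{in}$. If it charged less, it could profitably raise the price toward the gap and still keep the consumer. If it charged more, the consumer switches away, and $n$ would gain by cutting back to the gap. Together these give the stated $p_n^*(t_i)$ for every consumer type, and integrating over $t_i$ yields the profit expression.
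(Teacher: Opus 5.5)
Your proposal is correct and follows the same route as the paper (per-type Bertrand competition in which the highest-value producer extracts exactly $v_{in}-O_{in}$ and every other producer's formula price is zero), while spelling out the deviation checks and tie cases that the paper compresses into ``Bertrand competition implies.'' One refinement: no equilibrium selection is needed for the winner's price, because prices are already nonnegative (so there is no below-cost region) and, in any equilibrium, the AI option together with rivals' ability to undercut forces $p_n\le v_{in}-O_{in}$, while the winner's own optimality forces $p_n\ge v_{in}-O_{in}$; this also repairs your Step~3 claim that a higher price drives the consumer away, which holds only once the binding rival is known to price at zero.
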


\begin{proof}
Consider a consumer of type $t_i$. If $v_{in}<O_{in}$, then even a zero price cannot induce the consumer to purchase from content producer $n$. Hence, producer $n$ loses the consumer and earns zero profit. If instead $v_{in}\ge O_{in}$, then producer $n$ can guarantee the consumer by charging any price strictly below $v_{in}-O_{in}$. Under the tie-breaking rule following Equation~\eqref{eqn:maxmax}, Bertrand competition implies that the producer prices at the maximum level that still leaves the consumer indifferent between purchasing from producer $n$ and choosing the best alternative. Therefore, the equilibrium price is $p_{n}^*(t_i) = \max\{v_{in}-O_{in},0\}$.
Integrating over consumer types gives the producer's equilibrium profit.
\end{proof}

Lemma~\ref{lem:pricing} implies that a content producer captures a consumer if and only if it provides the highest net value of information among all available information sources. The equilibrium price equals the producer's informational advantage over the consumer's best outside option, allowing the pricing stage to be summarized entirely by differences in information values. Notice that the profit function contains no explicit production cost. Content producers allocate a fixed informational capacity, captured by the constraint $\|b_n\|\leq 1$, and we abstract from additional costs of reallocating this capacity or serving consumers.

Finally, we turn to the first stage, in which content producers simultaneously choose their content strategies before consumer types are realized. Since consumer types are independently and identically distributed, the expected profit of content producer $n$ under the profile $\left((x_n,t_n)\right)_{n=1}^N$ is
\begin{equation}
\Pi_n\!\left((x_n,t_n),(x_{-n},t_{-n})\right)=\mathbb{E}_{t_i}\!\left[p_{n}^*(t_i)\right].
\label{eqn:profit}
\end{equation}

The first stage therefore, defines a one-shot complete-information game among content producers. Although the realized consumer types are unknown when content strategies are chosen, all producers share the same prior distribution over consumer types and fully observe the game's primitives. Consequently, each producer chooses its content strategy to maximize expected profit in Equation~\eqref{eqn:profit}.

\subsection{Equilibrium Effects of AI}\label{subsec:aiaggregator}
In this section, we characterize the symmetric equilibrium of the model and compare it with the benchmark equilibrium without AI to illustrate how the introduction of AI affects content production. We begin by imposing two assumptions.

\begin{assumption}\label{ass:connected}
The number of content producers satisfies $N\geq \underline N(\gamma)$ where
$$\underline N(\gamma)\equiv
\begin{cases}
2, & \gamma\geq 1,\\
\min\left\{N\in\mathbb N:N\geq2,\,
\gamma^2(\pi/N)\geq \sin^3\left(\pi/N\right)
\right\}, & 0<\gamma<1.
\end{cases}$$
\end{assumption}

Assumption \ref{ass:connected} rules out disconnected readerships. The intuition is that when $0 < \gamma < 1$, a sufficiently horizontal signal is valuable to consumers both near a content producer's target direction and near the opposite direction. Throughout the analysis, we therefore focus on the case in which each content producer's readership forms a single connected interval.

\begin{assumption}\label{ass:alpha}
The AI's reliability parameter $\alpha$ satisfies
\begin{equation}
    \alpha < \frac{1}{\sqrt{2}}.
\end{equation}
\end{assumption}

Assumption \ref{ass:alpha} focuses the analysis on the parameter region in which AI influences consumers' information acquisition decisions without fully displacing content producers. This restriction ensures that a producer can offer a sufficiently well-matched consumer greater information value than AI, although, for consumers far from every producer's target direction, AI may be more valuable than any content actually offered. Theorem \ref{thm:AIequilibrium} characterizes the equilibrium of the model. The detailed proof is provided in the Appendix.

\begin{theorem}\label{thm:AIequilibrium}
Let $h=\pi/N$ and suppose that Assumptions~\ref{ass:connected} and~\ref{ass:alpha} hold. Define
$$
\underline{\alpha}(\gamma,N)
=\frac{\gamma^2 h+\sin h\,\cos 2h}{\sqrt{2(\gamma^2+1)\bigl(\gamma^2 h^2+\sin^2 h\bigr)}},
\quad
\overline{\alpha}(\gamma,N)
=\frac{\gamma^2 h+\sin h\,\cos h}{\sqrt{2(\gamma^2+1)\bigl(\gamma^2 h^2+\sin^2 h\bigr)}}.
$$
The equilibrium takes one of three forms, depending on the AI's reliability $\alpha$.
\begin{enumerate}
    \item If $0 < \alpha \leq \underline{\alpha}(\gamma,N)$, then:
    \begin{enumerate}
        \item Every content producer chooses vertical emphasis
        $$
            x_n^* = x^* = \frac{\gamma^2 h^2}{\gamma^2 h^2+\sin^2 h} \quad \text{for all } n,
        $$
        and the target directions $t_1^*,\dots,t_N^*$ are equally spaced with adjacent producers separated by an angular distance of $2h$.
        \item Content producer $n$ charges consumer $i$ a price $p_n^*(t_i)=\max\{v_{in} (x^*,t_n^*)-V_{i,-n}(x^*,t_{-n}^*),\,0\}$, where $V_{i,-n}(x^*,t_{-n}^*)=\max_{m\neq n} v_{im}(x^*,t_m^*)$. For consumers with $|t_i-t_n^*|\leq h$, this maximum is attained by an adjacent producer.
        \item Each consumer $i$ with $|t_i-t_n^*|\leq h$ chooses content producer $n$; no consumer chooses AI.
    \end{enumerate}
    \item If $\underline{\alpha}(\gamma,N) < \alpha \leq \overline{\alpha}(\gamma,N)$, then:
    \begin{enumerate}
        \item Content producers' content strategies and consumers' information acquisition choices are as in Case~1.
        \item There exists a unique cutoff $\hat{\delta}\in(0,h]$ such that producer $n$ charges consumer $i$ the price
        $$
            p_n^*(t_i)=
            \begin{cases}
                v_{in} (x^*,t_n^*) - v_{AI} & \text{if } |t_i-t_n^*|\leq \hat{\delta},\\
                v_{in} (x^*,t_n^*)-V_{i,-n}(x^*,t_{-n}^*) & \text{if } \hat{\delta}< |t_i-t_n^*|\leq h,\\
                0 & \text{otherwise},
            \end{cases}
        $$
        where $V_{i,-n}(x^*,t_{-n}^*)$ is defined as in Case~1.
    \end{enumerate}
    \item If $\overline{\alpha}(\gamma,N) < \alpha < 1/\sqrt{2}$, then:
    \begin{enumerate}
        \item Every content producer chooses vertical emphasis $x_n^* = x^*$ for all $n$; target horizontal directions are as in Case~1.
        \item Producer $n$ charges consumer $i$ the price $p_n^*(t_i) = \max\{ v_{in} (x^*,t_n^*)  - v_{AI},\,0\}$.
        \item There exists a unique threshold $\delta^*\in(0,h)$ such that consumer $i$ purchases from producer $n$ if and only if $|t_i-t_n^*|\leq \delta^*$; all remaining consumers choose AI.
        \item The pair $\bigl(x^*,\,\delta^*\bigr)$ is the unique solution to the system
        $$
            \cos\delta^* = \frac{\alpha\sqrt{2(\gamma^2+1)}-\gamma\sqrt{x^*}}{\sqrt{1-x^*}},
            \quad
            x^* = \frac{\gamma^2(\delta^*)^2}{\gamma^2(\delta^*)^2+\sin^2\delta^*}.
        $$
    \end{enumerate}
\end{enumerate}
\end{theorem}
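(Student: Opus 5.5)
The plan is backward induction on top of Lemmas~\ref{lm:valueofinformation} and~\ref{lem:pricing}. The pricing stage is already solved by Lemma~\ref{lem:pricing}, so what remains is to find each producer's best response in $(x_n,t_n)$ when rivals use $x^*$ with targets spaced $2h$ apart, and to track which outside option binds in $O_{in}=\max\{V_{i,-n},v_{AI}\}$.

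\emph{Step 1 (profit and first-order condition).} Write $s=t_i-t_n$ and $D=\gamma^2h^2+\sin^2h$. Under Assumption~\ref{ass:connected}, producer $n$'s readership is a connected interval $[-\delta_L,\delta_R]$, and the term inside the absolute value in Lemma~\ref{lm:valueofinformation} is positive there. Lemma~\ref{lem:pricing} then gives
\[
\Pi_n=\frac{1}{2\pi}\int_{-\delta_L}^{\delta_R}\bigl(v_{in}-O_{in}\bigr)\,ds .
\]
The readership endpoints are exactly where the integrand vanishes, so by Leibniz's rule the boundary terms drop out of the derivative. Moreover, $O_{in}$ does not depend on $(x_n,t_n)$. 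Hence
\[
\frac{\partial \Pi_n}{\partial x_n}\;\propto\;\int_{-\delta_L}^{\delta_R}\Bigl(\frac{\gamma}{2\sqrt{x_n}}-\frac{\cos s}{2\sqrt{1-x_n}}\Bigr)ds .
\]
At a symmetric profile $\delta_L=\delta_R=\delta$, setting this to zero gives $\gamma\delta\sqrt{1-x}=\sqrt{x}\,\sin\delta$, that is, $x=\gamma^2\delta^2/(\gamma^2\delta^2+\sin^2\delta)$. Similarly, the derivative in $t_n$ is proportional to $\int\sin s\,ds$, which vanishes at symmetric readerships. This confirms that equal spacing is consistent with the first-order conditions.

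\emph{Step 2 (identifying the regimes).} If no consumer chooses AI, the readership is $[-h,h]$, which yields $x^*$ with $\sqrt{x^*}=\gamma h/\sqrt D$ and $\sqrt{1-x^*}=\sin h/\sqrt D$. Which option binds depends on comparing $v_{AI}$ with the neighbour's value.
\begin{itemize}
\item AI never binds if $v_{AI}\le V_{i,-n}$ everywhere. The binding point is $s=0$, where the neighbour sits at distance $2h$. Substituting, the condition becomes $\alpha\sqrt{2(\gamma^2+1)}\le(\gamma^2h+\sin h\cos 2h)/\sqrt D$, which is exactly $\alpha\le\underline\alpha$ (Case~1).
\item AI steals no consumers if $v_{AI}\le v_{in}$ at the boundary $s=h$. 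This becomes $\alpha\le\overline\alpha$, with $\cos h$ in place of $\cos 2h$.
\item Between the two thresholds (Case~2), the readership is still $[-h,h]$, so by Step~1 the content choice $x^*$ is unchanged. The cutoff $\hat\delta$ solves $v_{AI}=V_{i,-n}$. Its uniqueness follows because, on $[0,h]$, the nearest neighbour's value is strictly increasing in $|s|$.
\end{itemize}

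\emph{Step 3 (Case~3).} For $\alpha>\overline\alpha$ the readership is $[-\delta^*,\delta^*]$ with $v_{in}(\delta^*)=v_{AI}$, which is the $\cos\delta^*$ equation. Combining it with Step~1 gives the stated system. For uniqueness I substitute the $x$-equation into the $\cos\delta^*$ equation. This yields a single equation in $\delta$, and I will show that $\gamma\sqrt{x(\delta)}+\sqrt{1-x(\delta)}\cos\delta$ is strictly decreasing in $\delta$ on $(0,h]$.
\begin{itemize}
\item At $\delta\to0$ the left side tends to $\sqrt{\gamma^2+1}$, because there $x\to\gamma^2/(\gamma^2+1)$. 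This exceeds $\alpha\sqrt{2(\gamma^2+1)}$ by Assumption~\ref{ass:alpha}.
\item At $\delta=h$ the left side is below the AI value, by $\alpha>\overline\alpha$.
\end{itemize}
The intermediate value theorem then gives a unique $\delta^*\in(0,h)$.

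\emph{Main obstacle: global optimality.} Steps~1--3 only establish stationarity. I still need to rule out profitable non-local deviations: large changes in $x_n$, or moving $t_n$ so that the readership overlaps AI's region or a non-adjacent rival. My first attempt is to show that $\Pi_n$ is concave in $x_n$ for a fixed readership, since the integrand $\gamma\sqrt{x}+\sqrt{1-x}\cos s$ is concave in $x$ for $\cos s\ge0$. I would then handle the endogenous readership with an envelope-type argument: the pointwise maximum over readerships of concave functions, combined with the vanishing boundary terms. For $t_n$, I would show the profit is single-peaked in $t_n$ between neighbours, again using Assumption~\ref{ass:connected} to keep the readership connected. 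If full concavity fails, the fallback is a direct comparison of the profit at $x^*$ against the endpoints $x_n\in\{0,1\}$.
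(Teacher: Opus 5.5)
Your Steps~1--3 follow the paper's proof closely. The matching pieces are:
\begin{itemize}
\item the Leibniz derivative with a vanishing boundary term;
\item the two thresholds, obtained from the core consumer at $s=0$ (neighbour at distance $2h$) and the boundary consumer at $s=h$;
\item the observation that in Case~2 the readership is still $[-h,h]$, so the benchmark first-order condition survives (the paper phrases this as $\Pi_n=\Pi_n^0-C_n$ with $C_n$ independent of $x_n$);
\item the reduction of Case~3 to one equation in $\delta$, whose left side is $\sqrt{2(\gamma^2+1)}\,\Phi(\gamma,\delta)$ in the paper's notation.
\end{itemize}
The monotonicity you promise in Step~3 is true, but you should supply it. Let $A=\gamma^2\delta+\sin\delta\cos\delta$ and $B=\gamma^2\delta^2+\sin^2\delta$. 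Then $B'=2A$, so $(A/\sqrt{B})'=(A'B-A^2)/B^{3/2}$. Moreover $A'B-A^2=\gamma^2q(\delta)-\sin^4\delta$, where $q(\delta)=(\sin\delta-\delta\cos\delta)^2-\delta^2\sin^2\delta<0$ on $(0,\pi/2]$.

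The genuine gap is the second-order step, because your concavity argument does not go through. Write $\Pi_n(x_n)=\max_R\frac{1}{2\pi}\int_R\bigl(v_{in}(x_n)-O_{in}\bigr)\,ds$. This is a pointwise maximum over readerships $R$ of functions that are concave in $x_n$. Such a maximum is generally not concave; already $\max\{f,0\}$ with $f$ linear fails.

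Here the failure is substantive, because the moving readership contributes a strictly positive term to $\Pi_n''$. The paper implicitly differentiates the marginal-consumer condition and substitutes the first-order condition. It shows that, at the symmetric candidate, $\Pi_n''$ is a positive multiple of
\[
(\sin h-h\cos h)^2-\frac{2h^2\sin^2 h}{x^*}
\]
in Cases~1--2, where the neighbour binds at the margin (in Case~2 this requires $\hat\delta<h$). In Case~3, where the outside option is the constant $v_{AI}$, it is a positive multiple of $(\sin\delta^*-\delta^*\cos\delta^*)^2-(\delta^*)^2\sin^2\delta^*/x^*$. The first term in each is the convexity created by the moving readership. The expression is negative only because $0<\sin z-z\cos z<z\sin z$ on $(0,\pi/2]$ and $x^*<1$.

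That local computation is what you need to add. Your fallback of comparing with $x_n\in\{0,1\}$ cannot replace it: without concavity, it does not exclude other interior maxima.

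For calibration, the paper itself proves only this local condition, not global optimality. It also builds equal spacing of targets into the symmetric-equilibrium concept rather than verifying optimality in $t_n$. If you pursue $t_n$, note that in Case~3 profit is locally flat in $t_n$: a small rotation merely rotates the readership against the constant $v_{AI}$. So only weak single-peakedness is available there.
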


Theorem~\ref{thm:AIequilibrium} characterizes how the equilibrium changes as AI becomes more reliable, and the theorem identifies three regimes. 

When AI reliability is low, $0<\alpha\leq \underline{\alpha}(\gamma,N)$, AI provides sufficiently low value of information that no consumer chooses it. Content producers therefore behave as they would in a world without AI. The key force determining both editorial choices and equilibrium prices is that, under Assumption~\ref{ass:connected}, each content producer competes only with its adjacent content producers and serves consumers whose locations satisfy $|t_i-t_n^*|\leq h$. One important feature of this AI-irrelevant equilibrium is the relationship between consumer location, prices, and surplus. Consumers located exactly at a producer's target direction receive the most information value from that producer. However, precisely because their alternatives provide less value of information, the producer can charge them the highest prices. As a result, these core consumers receive the lowest consumer surplus. By contrast, consumers located at the boundary between two adjacent content producers are indifferent between the two producers. Competition between adjacent producers then drives the price for these boundary consumers to zero. Thus, consumer surplus is lowest at the core of each producer's audience and highest at the fringe.

Once AI reliability crosses the lower threshold, $\underline{\alpha}(\gamma,N)<\alpha\leq \overline{\alpha}(\gamma,N)$, AI becomes attractive to some consumers. Given the consumer surplus pattern described above, the first consumers for whom the AI becomes relevant are not the fringe consumers, but rather the core consumers located close to a content producer's target direction. Content producers respond by lowering prices for these consumers near their target direction, just enough to retain them, as reflected in the pricing strategy in Case~2. The key feature of this region is that no consumer actually switches to AI. Content producers continue to serve the same audiences, and their editorial choices remain exactly as in Case~1. Thus, in this parameter range, AI operates purely as a competitive threat: consumers gain through lower prices, content producers lose margins, but the content itself remains unchanged.

Third, when AI reliability is high, $\overline{\alpha}(\gamma,N)<\alpha<1/\sqrt{2}$, AI becomes attractive enough that price reductions alone can no longer retain all consumers, and some consumers now choose AI over content producers. Consumers near the boundary between adjacent content producers are the first to leave, because they receive the lowest value of information from any content producer. Each content producer therefore serves only consumers sufficiently close to its own target direction, with the size of this retained audience summarized by the cutoff $\delta^*< h$. In this high-$\alpha$ regime, editorial choices respond because the retained audience has changed. The equilibrium pair $(x^*,\delta^*)$ is characterized by two conditions: the marginal consumer at distance $\delta^*$ is indifferent between purchasing from the content producer at a price of zero and using the AI, and each content producer chooses the editorial mix that is optimal for the consumers it still serves. Thus, once the AI changes the composition of a producer's audience, it also changes its editorial incentives.

The next proposition characterizes how content producers' editorial choices respond to the availability of AI. The proof is provided in the Appendix. 

\begin{proposition}\label{prop:AI_noAI_comparison}
Denote by $(x_0^*,t_{0n}^*)$ the symmetric equilibrium editorial choices in the benchmark economy without AI. Under Assumptions~\ref{ass:connected} and~\ref{ass:alpha}, the equilibrium editorial choices $(x^*,t_n^*)$ in the economy with AI compare with the benchmark as follows.
\begin{enumerate}
\item For every $\alpha\in(0,1/\sqrt{2})$, up to a common rotation and a relabeling of producers, the equilibrium target directions $t_n^*$ coincide with $t_{0n}^*$: they are equally spaced at angular distance $2h$ and depend only on $N$, not on $\alpha$ or $\gamma$.
\item If $\alpha\leq\overline{\alpha}(\gamma,N)$, then $x^*=x_0^*$. If $\alpha>\overline{\alpha}(\gamma,N)$, then $x^*<x_0^*$.
\item For $\alpha>\overline{\alpha}(\gamma,N)$, the gap $x_0^*-x^*$ is strictly increasing in $\alpha$ and strictly decreasing in $N$.
\item The threshold $\overline{\alpha}(\gamma,N)$ is strictly increasing in $\gamma$.
\end{enumerate}
\end{proposition}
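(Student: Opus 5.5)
The plan is to reduce everything to properties of two explicit one-variable functions built from Theorem~\ref{thm:AIequilibrium}. First, the benchmark without AI is obtained by letting $\alpha\to 0$, equivalently by deleting $v_{AI}$ from the outside option in Lemma~\ref{lem:pricing}. The analysis behind Case~1 then goes through verbatim, since AI plays no role there. This gives $t_{0n}^*$ equally spaced at distance $2h$ and $x_0^*=f(h)$, where
$$f(\delta)\equiv\frac{\gamma^2\delta^2}{\gamma^2\delta^2+\sin^2\delta}=\Bigl(1+\tfrac{1}{\gamma^2}\bigl(\tfrac{\sin\delta}{\delta}\bigr)^2\Bigr)^{-1}.$$
Part~1 is then immediate: all three cases of Theorem~\ref{thm:AIequilibrium} give the same equally spaced targets, determined only by $N$ up to rotation and relabeling. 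Since $\sin\delta/\delta$ is strictly decreasing on $(0,\pi/2]$, and $h\le\pi/2$ because $N\ge2$, $f$ is strictly increasing on $(0,h]$. For Part~2, Cases~1--2 give $x^*=f(h)=x_0^*$. In Case~3, $x^*=f(\delta^*)$ with $\delta^*<h$, so $x^*<x_0^*$.

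For Part~3, I substitute $\sqrt{f(\delta)}=\gamma\delta/R$ and $\sqrt{1-f(\delta)}=\sin\delta/R$, where $R=\sqrt{\gamma^2\delta^2+\sin^2\delta}$, into the first equation of Case~3. This rewrites the system as the single equation
$$\alpha\sqrt{2(\gamma^2+1)}=\varphi(\delta^*),\qquad \varphi(\delta)\equiv\frac{\gamma^2\delta+\sin\delta\cos\delta}{\sqrt{\gamma^2\delta^2+\sin^2\delta}}.$$
As a consistency check, $\varphi(h)/\sqrt{2(\gamma^2+1)}=\overline{\alpha}$. Differentiating, the sign of $\varphi'$ is the sign of
$$(\gamma^2+\cos2\delta)(\gamma^2\delta^2+\sin^2\delta)-(\gamma^2\delta+\sin\delta\cos\delta)^2.$$
Expanding, this equals
$$\gamma^2\bigl[(\delta\cos\delta-\sin\delta)^2-\delta^2\sin^2\delta\bigr]-\sin^4\delta.$$
The bracket factors as
$$(\delta\cos\delta-\sin\delta-\delta\sin\delta)(\delta\cos\delta-\sin\delta+\delta\sin\delta).$$
The first factor is negative on $(0,\pi/2]$. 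The second factor is positive provided $\sin\delta-\delta\cos\delta<\delta\sin\delta$ there.

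This elementary trigonometric inequality is the step I expect to be the main obstacle. I would try to prove it by examining $k(\delta)=\delta\sin\delta+\delta\cos\delta-\sin\delta$, which satisfies $k(0)=0$. Its derivative is $k'(\delta)=\delta(2\cos\delta-\sin\delta)+\sin\delta$. I would argue $k>0$ by splitting $(0,\pi/2]$ at the point where $2\cos\delta=\sin\delta$. On the first piece $k'>0$ directly. On the second piece I would use the direct bound $\delta(\sin\delta+\cos\delta)\ge\delta>\sin\delta$ for $\delta\le\pi/2$. Indeed, that direct bound alone already proves $k>0$ on the whole interval.

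Given that inequality, $\varphi$ is strictly decreasing on $(0,h]$. The implicit function theorem then gives that $\delta^*$ is strictly decreasing in $\alpha$. Hence $x^*=f(\delta^*)$ is strictly decreasing in $\alpha$, while $x_0^*$ does not depend on $\alpha$, so the gap $x_0^*-x^*$ is strictly increasing in $\alpha$. For the comparative static in $N$, the key observation is that the reduced equation $\alpha\sqrt{2(\gamma^2+1)}=\varphi(\delta^*)$ does not involve $N$. Therefore, whenever $\alpha>\overline{\alpha}(\gamma,N)$, $x^*$ is independent of $N$. Meanwhile $x_0^*=f(\pi/N)$ is strictly decreasing in $N$, because $f$ is increasing, and so the gap is strictly decreasing in $N$.

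For Part~4, set $u=\gamma^2$, $s=\sin h$ and $c=\cos h$. I would work with the square of the threshold,
$$\overline{\alpha}^2=\frac{(uh+sc)^2}{2(u+1)(uh^2+s^2)},$$
and show its logarithmic derivative in $u$ is positive, namely
$$\frac{2h}{uh+sc}-\frac{1}{u+1}-\frac{h^2}{uh^2+s^2}>0.$$
Clearing the (positive) denominators turns this into a polynomial inequality in $u$ with coefficients involving $h$, $s$ and $c$. I expect the coefficients to reduce to nonnegative trigonometric expressions, such as $h-sc>0$ and $s-hc>0$ for $h\in(0,\pi/2]$, which I would verify coefficient by coefficient. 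The main risks in the whole plan are this coefficient check and the factor-sign inequality in Part~3.
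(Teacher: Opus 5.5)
Your proposal is correct and follows essentially the same route as the paper. Both reduce Case~3 to $\alpha=\Phi(\gamma,\delta^*)$ and $x^*=X(\gamma,\delta^*)$, use the monotonicity of both functions in $z$ (your sign expression $\gamma^2[(\delta\cos\delta-\sin\delta)^2-\delta^2\sin^2\delta]-\sin^4\delta$ is exactly the paper's $\gamma^2q(z)-\sin^4 z$), and note that the reduced equation does not involve $N$. Your Part~4 coefficient check also goes through: after clearing denominators the $u^2$ terms cancel, leaving $h(h^2-2hsc+s^2)\,u+s\,[h(s-hc)+s(h-sc)]$, which are precisely the two positive terms in the paper's expression for $\partial\Phi/\partial\gamma$, and your bound $\delta(\sin\delta+\cos\delta)\ge\delta>\sin\delta$ supplies a proof of an inequality the paper only asserts.
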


Proposition~\ref{prop:AI_noAI_comparison} delivers the central message of our analysis: AI's information synthesis ability reshapes \emph{what} content producers make, and it does so along the vertical emphasis dimension alone. Content producers' horizontal positions are pinned down by the competitive pressure to spread out across the taste spectrum, and this force is unaffected by AI (Point~1). Vertical emphasis, by contrast, is a competitive instrument: in the benchmark economy, content producers invest in the vertical dimension partly to appeal to poorly matched consumers at the horizontal fringes of their audiences. Once the AI becomes reliable enough to capture precisely these fringe consumers, the instrument loses its purpose, and content producers redirect editorial resources toward the tastes of their remaining core audiences (Point~2). 

The remaining points characterize when this distortion bites hardest. It grows with the AI's reliability $\alpha$, as ever-better AI content peels away ever-better-matched consumers (Point~3). It shrinks with the number of content producers $N$, because crowded markets already induce horizontally oriented content, leaving less vertical emphasis investment for the AI to crowd out (Point~3). And it disappears entirely in markets where consumers' preference for the vertical dimension relative to the horizontal dimension, $\gamma$, is sufficiently strong, since the reliability threshold that the AI must clear to capture any consumers rises with consumers' preference for vertical quality (Point~4).

Although our model is deliberately stylized, early evidence from the deployment of AI search is consistent with its central mechanism. Three patterns stand out. First, AI search appears to remove disproportionately low-value, marginal readers, leaving a smaller but more engaged core audience. Using benchmark data from hundreds of publisher sites, Piano reports that search traffic fell by 36\% in 2025, while total audience fell by only 9\% and revenue by only 16\%.\footnote{\url{https://www.piano.io/resources/benchmark-update-what-happened-to-search-traffic-in-2025}} Moreover, the search visitors who disappeared were less valuable than those who remained: average new-subscription revenue per remaining search visitor rose by 31\%. This compositional shift is precisely the change in audience composition that drives our theoretical prediction. Second, publishers recognize this shift and are reorienting their objectives toward the remaining audience. \citet{reuters2026dnr} reports that, as AI search erodes referral traffic, publishers are increasingly seeking to ``maximise the engagement and value of the smaller but more loyal audience which continues to read, watch, and listen to news organizations' own outlets''. Third, this reorientation is reflected in content strategy. In a 2026 survey of 90 publishers conducted by Digiday and Braze, respondents reported emphasizing member-exclusive newsletters, proprietary research and data, podcasts, personalized content, and other formats designed to deepen relationships with existing subscribers.\footnote{\url{https://digiday.com/sponsored/the-state-of-subscriber-engagement-in-the-ai-era-publishers-first-party-relationship-strategies/}} Such content caters to the specific tastes of a committed core audience rather than to the broad audience, consistent with the shift in editorial emphasis our model predicts.

Our model highlights two policy implications that are easily overlooked when AI synthesizers are evaluated solely from the consumer side. First, assessments of AI should account for content producers’ endogenous supply-side responses. Any welfare evaluation of AI, and any regulatory intervention built on it, must account not only for the quality and accessibility of AI-generated answers today, but also for their equilibrium effect on content producers' incentive to produce the original content. Second, our results point to a potential lever for mitigating this distortion. Point~4 of Proposition~\ref{prop:AI_noAI_comparison} shows that when consumers place greater weight on the vertical dimension, that is, on the underlying facts and quality of an event relative to horizontal taste alignment, AI must become substantially more reliable before it begins to distort producers’ editorial choices. Although policymakers cannot directly choose consumers’ preferences, society’s average concern for the vertical dimension may be influenced through education, information campaigns, media literacy, quality certification, and marketing that makes factual accuracy and content quality more salient. Such efforts can strengthen consumers’ demand for vertically differentiated, high-quality content and thereby preserve producers’ incentives to invest in it. In this sense, one way to increase the resilience of original content production in the age of AI is not only to regulate the AI technology itself, but also to strengthen society’s demand for factual and quality-oriented content.

\section{Extensions}\label{sec:extensions}
In this section, we examine several aspects that are abstracted away from the main analysis. Our main analysis focuses on how the introduction of AI affects content production. However, greater polarization in the content produced does not necessarily imply greater divergence in consumers’ overall evaluations, as the content consumers ultimately encounter also depends on their choice of outlet. Accordingly, in Section \ref{subsec:consumerevaluation}, we examine how the AI synthesizer affects the alignment of consumer evaluations. In addition, as discussed in Section \ref{subsec:aiaggregatormodel}, our baseline model of AI abstracts from two features of real-world AI synthesizers. In Section \ref{subsec:personalizedAI}, we extend the model to allow AI to personalize its output to individual consumers. In Section \ref{subsec:endogenousAI}, we further endogenize the quality of the AI synthesizer’s output by allowing it to depend on the content supplied by content producers.

\subsection{Consumer Evaluation Alignment}\label{subsec:consumerevaluation}
We first define our measure of consumer evaluation alignment. Let $q^*(t_i)\in\{1,\ldots,N,AI\}$ denote consumer $i$'s equilibrium information source, and let
$$
z_i\equiv \mathbb E[\theta_i\cdot\omega\mid\mathcal I_{i,q^*(t_i)}]
$$
denote her posterior evaluation of the downstream action. We define consumer evaluation alignment as
$$
\mathcal S \equiv \mathbb E_{t_i,t_j}\left[\operatorname{Corr}(z_i,z_j\mid t_i,t_j)\right],
$$
where $i$ and $j$ are independently drawn from the unit mass of consumers. This measure captures the extent to which consumers' posterior evaluations move in the same direction across realizations of the underlying state. A higher value indicates greater co-movement in consumers' evaluations, even when their evaluation levels differ. Consumer evaluation alignment is a descriptive measure rather than a welfare criterion, since each consumer's payoff depends only on her own downstream decision.

Our main model specifies the AI's information structure from the perspective of an individual consumer. This is sufficient for characterizing consumers' information acquisition choices and the resulting equilibrium, but does not require specifying the joint distribution of AI information across consumers. Consumer evaluation alignment, in contrast, is a joint outcome and therefore requires such a specification. We impose the following assumption.

\begin{assumption}\label{ass:informationacrossconsumers}
Conditional on the underlying state $\omega$, the AI content realizations received by different consumers are independently generated. That is, for any $i\neq j$,
$$
a_i^{AI}\perp a_j^{AI}\mid\omega.
$$
Conditional on $\omega$, these realizations are also independent of every producer signal error $\varepsilon_{kn}$.
\end{assumption}

Assumption~\ref{ass:informationacrossconsumers} completes the joint information structure without changing the individual-level AI information structure in the main model. In particular, each consumer faces the same AI reliability $\alpha$ as before; the assumption only specifies how the realizations of this information technology are related across consumers and to producers' signal errors.

To compare consumer evaluation alignment with and without AI, let $\mathcal{S}_0$ denote consumer evaluation alignment in the benchmark without AI and $\mathcal{S}_{AI}$ denote consumer evaluation alignment in the presence of AI. Proposition \ref{prop:consumeralignment} summarizes the key result, with the proof provided in the Appendix.

\begin{proposition}\label{prop:consumeralignment}
Suppose Assumptions~\ref{ass:connected}, ~\ref{ass:alpha}, and~\ref{ass:informationacrossconsumers} hold. Then:
\begin{enumerate}
\item If $0<\alpha\leq\overline{\alpha}(\gamma,N)$, then
$$
\mathcal{S}_{AI}=\mathcal S_0.
$$
\item If $\overline{\alpha}(\gamma,N)<\alpha<1/\sqrt{2}$, then
$$
\mathcal S_{AI}<\mathcal S_0.
$$
\end{enumerate}
\end{proposition}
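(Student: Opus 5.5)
The plan is to compute each pairwise correlation $\operatorname{Corr}(z_i,z_j\mid t_i,t_j)$ in closed form, according to which sources $i$ and $j$ use, and then average over independent draws of $t_i,t_j$. Case~1 will follow from Theorem~\ref{thm:AIequilibrium} and Proposition~\ref{prop:AI_noAI_comparison}. Case~2 will follow from a term-by-term comparison after splitting pairs by equilibrium source.

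\emph{Step 1 (posterior evaluations).} For a consumer served by producer $n$, the pair $(\theta_i\cdot\omega,\,s_{in})$ is jointly normal with $\operatorname{Var}(s_{in})=2$. Hence $z_i=\tfrac12(\theta_i\cdot b_n)\,s_{in}$.

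For a consumer using AI, the output $a_i^{AI}$ is $N(0,I)$ whether or not synthesis succeeds. So the posterior probability of success equals $\alpha$ for every realization, and $z_i=\alpha\,\theta_i\cdot a_i^{AI}$.

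Within the readerships of Theorem~\ref{thm:AIequilibrium}, $\theta_i\cdot b_n>0$, so the scalar prefactors drop out of correlations. Using Assumption~\ref{ass:informationacrossconsumers} and independence of the producer errors, I get three cases:
\begin{itemize}
\item[(i)] Both consumers served by producers $n,m$: correlation $\tfrac12\, b_n\cdot b_m=\tfrac12\bigl(x+(1-x)\cos(t_n-t_m)\bigr)$.
\item[(ii)] Both consumers use AI: correlation $\alpha^2(\gamma^2+\cos(t_i-t_j))/(\gamma^2+1)$, since the covariance is nonzero only when both syntheses succeed.
\item[(iii)] Consumer $i$ uses AI and $j$ is served by producer $m$: correlation $\alpha\,\theta_i\cdot b_m/\sqrt{2(\gamma^2+1)}$.
\end{itemize}

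\emph{Step 2 (averaging).} By symmetry, a consumer who is served by a producer is served by each producer with probability $1/N$, independently of the other consumer's draw. The targets are equally spaced, so $\sum_{m}\cos(t_n-t_m)=0$, including the term $m=n$. Likewise the uniform positions of two AI users give $\mathbb E\cos(t_i-t_j)=0$. Averaging each case then yields:
\begin{itemize}
\item[(i)] producer--producer pairs: $x/2$;
\item[(ii)] AI--AI pairs: $\alpha^2\gamma^2/(\gamma^2+1)$;
\item[(iii)] AI--producer pairs: $\alpha\gamma\sqrt{x}/\sqrt{2(\gamma^2+1)}$. Here the cosine term averages out over the producer index $m$ for a fixed AI user $t_i$.
\end{itemize}

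\emph{Step 3 (conclusion).} If $\alpha\le\overline\alpha$, Theorem~\ref{thm:AIequilibrium} gives no AI users and $x^*=x_0^*$, so $\mathcal S_{AI}=x_0^*/2=\mathcal S_0$.

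If $\alpha>\overline\alpha$, let $f=N\delta^*/\pi\in(0,1)$ be the served fraction. Then
$$\mathcal S_{AI}=f^2\,\tfrac{x^*}{2}+2f(1-f)\,\tfrac{\alpha\gamma\sqrt{x^*}}{\sqrt{2(\gamma^2+1)}}+(1-f)^2\,\tfrac{\alpha^2\gamma^2}{\gamma^2+1}.$$
It suffices to show that each of the three terms is strictly below $x_0^*/2$.

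\begin{itemize}
\item The first term satisfies $x^*<x_0^*$ by Proposition~\ref{prop:AI_noAI_comparison}.
\item For the other two, the key inequality is $x_0^*=\gamma^2h^2/(\gamma^2h^2+\sin^2h)\ge\gamma^2/(\gamma^2+1)$, which holds because $h\ge\sin h$.
\item With $\alpha^2<1/2$ (Assumption~\ref{ass:alpha}), the AI--AI term is below $\gamma^2/(2(\gamma^2+1))\le x_0^*/2$.
\item The cross term is below $\tfrac12\sqrt{x^*}\,\gamma/\sqrt{\gamma^2+1}<\tfrac12\sqrt{x_0^*}\sqrt{x_0^*}=x_0^*/2$.
\end{itemize}

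Hence $\mathcal S_{AI}<\mathcal S_0$.

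The main obstacle is Step~1 for AI users: justifying that the posterior mean is linear in $a_i^{AI}$ despite the non-Gaussian mixture. This rests on the fact that the success and failure outputs share the same $N(0,I)$ marginal. A secondary point is checking the independence structure used in the cross term, namely that the identity of $j$'s producer is uniform and independent of $t_i$.
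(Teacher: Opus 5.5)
Your proposal is correct and follows the paper's proof essentially step for step. It uses the same three pair-type correlations, the same averages $x^*/2$, $\alpha\gamma\sqrt{x^*}/\sqrt{2(\gamma^2+1)}$ and $\alpha^2\gamma^2/(\gamma^2+1)$ with weights $f^2$, $2f(1-f)$, $(1-f)^2$, and the same ingredients $\alpha<1/\sqrt{2}$, $\sin z<z$ and $x^*<x_0^*$.

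The differences are minor. The paper writes $\mathcal S_{AI}=\bigl[\mu_P\sqrt{x^*/2}+\mu_{AI}\,\alpha\gamma/\sqrt{\gamma^2+1}\bigr]^2$ and compares square roots, using $x^*>\gamma^2/(\gamma^2+1)$. You instead bound each mixture component by $x_0^*/2$, using $x_0^*\ge\gamma^2/(\gamma^2+1)$. Separately, the claim $\mathbb E\cos(t_i-t_j)=0$ for AI--AI pairs should rest on the $2h$-rotation invariance of the AI region, since AI users are not uniform on the whole circle.
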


Proposition~\ref{prop:consumeralignment} translates the supply-side distortion of Proposition~\ref{prop:AI_noAI_comparison} into its consequence for consumers. When $\alpha\leq\overline{\alpha}(\gamma,N)$, all consumers continue to obtain information from content producers, and producers maintain the same content strategies as in the benchmark without AI. Consumer evaluation alignment therefore remains unchanged. Once $\alpha>\overline{\alpha}(\gamma,N)$, however, some consumers switch to AI, while content producers respond by reducing their emphasis on the vertical dimension that generates commonality in information across consumers. At the same time, under Assumption~\ref{ass:informationacrossconsumers}, consumers who use AI receive independently generated information realizations conditional on the underlying state, limiting the commonality of their posterior evaluations. Together, these changes reduce the extent to which consumers' evaluations co-move, resulting in lower consumer evaluation alignment than in the benchmark without AI. Thus, the greater horizontal differentiation in content production identified in our main analysis is accompanied by greater divergence in consumers' evaluations once AI is sufficiently reliable to attract consumers.

\subsection{Personalized AI Synthesizer}\label{subsec:personalizedAI}
In this section, we consider an AI synthesizer that can tailor its output to each individual consumer's type. We retain the synthesis stage of the main model and add a single delivery stage on top of it. As before, the AI's synthesized content $a_i^{AI}$ recovers $\omega$ with probability $\alpha$, and with probability $1-\alpha$ is an independent draw from the prior distribution of $\omega$. Unlike in the main model, however, consumer $i$ does not observe $a_i^{AI}$ directly. Instead, given her preference vector $\theta_i=(\gamma,\cos t_i,\sin t_i)$, AI delivers the personalized signal
$$
s_i^{AI} =  \frac{\theta_i\cdot a_i^{AI}}{\sqrt{\gamma^2+1}}  +  \varepsilon_i^{AI},
$$
where $\varepsilon_i^{AI}\sim N(0,1)$ is an idiosyncratic noise term, independent across consumers and all other random variables in the model, and the factor $\sqrt{\gamma^2+1}=\lVert\theta_i\rVert$ normalizes the signal direction to unit length, so that the synthesizer's signal is directly comparable to content producers' signals under the capacity constraint of the main model.

Our signal structure embodies personalization in its purest form. Because AI projects its synthesis onto consumer $i$'s own preference vector $\theta_i$, it weights each dimension of the event in exact proportion to the importance that consumer $i$ attaches to it. In the language of the main model, conditional on successful synthesis, the signal delivered to consumer $i$ corresponds to the content strategy
$$
(x,t)  =  \left(\frac{\gamma^2}{\gamma^2+1},  t_i\right),
$$
that is, to consumer $i$'s \emph{ideal} content: the vertical emphasis $\gamma^2/(\gamma^2+1)$ replicates the weight her preferences place on the vertical dimension, and the horizontal direction is aimed exactly at her taste location $t_i$, so that no capacity is spent on horizontal directions she does not value. This is precisely the strategy that maximizes her information value under the capacity constraint.

The following lemma characterizes the value of information provided by the personalized AI synthesizer, and the proof is provided in the Appendix.

\begin{lemma}\label{lm:personalizedAI}
The value of information provided by the personalized AI synthesizer is
$$
v_{i,AI}^{\mathrm{pers}} = v_{AI}^{\mathrm{pers}} = \alpha\lambda\sqrt{\gamma^2+1}.
$$
\end{lemma}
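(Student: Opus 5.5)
We compute $v_{i,AI}^{\mathrm{pers}}=\mathbb{E}\bigl[\max\{\mathbb{E}[\theta_i\cdot\omega\mid s_i^{AI}],0\}\bigr]$ directly, reusing the approach behind Lemma~\ref{lm:valueofinformation}. Since the prior gives $\mathbb{E}[\theta_i\cdot\omega]=0$, the value of information is exactly the expected positive part of the posterior mean. Write $u=\theta_i\cdot\omega$ and $\tilde u=\theta_i\cdot\tilde\omega$, where $\tilde\omega$ is the independent prior draw used when synthesis fails. Then $u$ and $\tilde u$ are i.i.d.\ $N(0,\gamma^2+1)$. Let $\kappa=\sqrt{\gamma^2+1}$. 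The signal is
\[
s_i^{AI}=\frac{Y}{\kappa}+\varepsilon_i^{AI},\qquad Y=\begin{cases}u & \text{with probability } \alpha,\\ \tilde u & \text{with probability } 1-\alpha.\end{cases}
\]
Because the success indicator is independent of both $u$ and $\tilde u$, $Y$ is marginally $N(0,\kappa^2)$. Hence $s\equiv s_i^{AI}\sim N(0,2)$, and this marginal does not depend on whether synthesis succeeded.

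\textbf{Step 1: the posterior mean is linear in the signal.} Compute
\[
\mathbb{E}[u\mid s]=\alpha\,\mathbb{E}[u\mid s,\text{success}]+(1-\alpha)\,\mathbb{E}[u\mid s,\text{failure}].
\]
The posterior weights on success and failure equal the priors $\alpha$ and $1-\alpha$, because the conditional density of $s$ is $N(0,2)$ in both events. Under failure, $u$ is independent of $s$, so $\mathbb{E}[u\mid s,\text{failure}]=0$. Under success, $(u,s)$ is jointly Gaussian with $\operatorname{Cov}(u,s)=\kappa$ and $\operatorname{Var}(s)=2$, so $\mathbb{E}[u\mid s,\text{success}]=\kappa s/2$. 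Therefore
\[
\mathbb{E}[u\mid s]=\frac{\alpha\kappa s}{2}.
\]
This is the same truth-or-noise linearity that yields the factor $\alpha$ in Equation~\eqref{eqn:aivoi}.

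\textbf{Step 2: take the expectation.} With $s\sim N(0,2)$ and $\mathbb{E}[\max\{Z,0\}]=\sigma/\sqrt{2\pi}$ for $Z\sim N(0,\sigma^2)$,
\[
v_{i,AI}^{\mathrm{pers}}=\frac{\alpha\kappa}{2}\,\mathbb{E}[\max\{s,0\}]=\frac{\alpha\kappa}{2}\cdot\frac{\sqrt2}{\sqrt{2\pi}}=\frac{\alpha\kappa}{2\sqrt{\pi}}=\alpha\lambda\sqrt{\gamma^2+1}.
\]
The result does not depend on $t_i$, which gives $v_{i,AI}^{\mathrm{pers}}=v_{AI}^{\mathrm{pers}}$.

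\textbf{Sanity check.} This is exactly Equation~\eqref{eqn:contentproducervoi} evaluated at the ideal content $(x,t)=(\gamma^2/(\gamma^2+1),t_i)$, scaled by $\alpha$. At that content the term inside the absolute value is $\gamma^2/\kappa+1/\kappa=\kappa$, which gives $\lambda\kappa$ before scaling.

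\textbf{Main obstacle.} The only delicate point is Step 1: justifying that the signal's marginal law, and hence the posterior probability of success, is the same in both events. This holds because $\tilde u$ has the same distribution as $u$ and the noise $\varepsilon_i^{AI}$ is common to both events. Once that is established, the mixture posterior collapses to a linear function of $s$ and the rest is a one-line Gaussian computation.
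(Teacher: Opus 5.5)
Your proof is correct and follows the paper's argument essentially step for step: the signal is $N(0,2)$ under both success and failure, so the posterior success probability stays at $\alpha$; Gaussian updating under success and independence under failure give the linear posterior mean $\alpha\sqrt{\gamma^2+1}\,s_i^{AI}/2$; and the half-normal identity finishes the computation. Your sanity check against the ideal-content value of information is a nice consistency confirmation, though the proof does not need it.
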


Lemma~\ref{lm:personalizedAI} shows that personalization preserves a key feature of AI in our main model: although the signal itself is tailored to each consumer's type, its ex ante information value is the same across consumers. Intuitively, each consumer receives a signal directed exactly toward her own preference vector. Since all consumers have the same preference-vector norm, $\|\theta_i\|=\sqrt{\gamma^2+1}$, their ideal signals provide the same ex ante information value. The personalized AI therefore continues to provide a type-independent outside option against which content producers compete.

The personalized AI synthesizer can also be mapped directly into the information structure of the main model. Recall that the value of AI information in the main model is $v_{AI} = \alpha\lambda\sqrt{2(\gamma^2+1)}$.
Under personalization, it becomes $v_{AI}^{\mathrm{pers}} = \alpha\lambda\sqrt{\gamma^2+1} = \frac{\alpha}{\sqrt2}\lambda\sqrt{2(\gamma^2+1)}$.
Thus, from the perspective of consumers' information acquisition choices and content producers' competitive incentives, the personalized AI is equivalent to the AI in the main model with reliability $\alpha/\sqrt2$. Consequently, the equilibrium characterization and the mechanisms established in the main analysis continue to apply after replacing $\alpha$ with $\alpha/\sqrt2$. The adjustment in effective reliability reflects the additional unit-variance delivery noise.

\subsection{Endogenous AI Reliability}\label{subsec:endogenousAI}
Our main model treats the reliability of AI as exogenous. This abstraction allows us to isolate how the introduction of an AI synthesizer affects content producers' incentives, but it abstracts from a potential feedback in the opposite direction: the quality of AI-generated information may itself depend on the content available from original content producers. Consequently, changes in the information supplied by content producers may also affect the quality of the AI's output.

To capture this feedback, we allow AI reliability to depend on the amount of vertical information available in the content-provider ecosystem. Let
$$\bar{x} \equiv \frac{1}{N}\sum_{n=1}^{N}x_n$$
denote the average vertical emphasis of content producers. The AI's realized reliability is
$$\alpha(\bar{x};\eta),$$
where $\eta$ indexes the underlying capability of the AI technology.

\begin{assumption}\label{ass:sourcereliability}
For each $\eta\in[0,1/\sqrt{2})$, $\alpha(\bar{x};\eta)$ is nonnegative, weakly increasing, and twice continuously differentiable in $\bar{x}\in[0,1]$, with $\alpha(1;\eta)=\eta$. Moreover, $\alpha(\bar{x};\eta)$ is continuously differentiable in $\eta$, with $\alpha_\eta(\bar{x};\eta)>0$.
\end{assumption}

The parameter $\eta$ captures the underlying capability of the AI technology: when content producers devote all of their capacity to the vertical dimension, AI achieves reliability $\eta$. The condition $\alpha_x(\bar{x};\eta)\geq 0$ captures the source dependence of AI reliability, whereby greater production of vertical information by content producers improves the reliability of AI. We refer to the technology as strictly source-dependent at $\bar{x}$ whenever $\alpha_x(\bar{x};\eta)>0$. The main model is nested as the special case $\alpha(\bar{x};\eta)\equiv\eta$, under which AI reliability is independent of content producers' content choices.

We use average vertical emphasis, rather than the total amount of vertical content, so that AI reliability depends on the composition of content supplied by content producers rather than mechanically on the number of producers. This specification deliberately abstracts from other features of the original content environment, such as horizontal diversity and correlation across sources, allowing us to isolate the feedback between producers' provision of vertical information and the reliability of the AI synthesizer. All other elements of the model remain unchanged.

Following the main model, we can derive the value of information provided by AI remains $v_{AI}(\bar{x};\eta) = \alpha(\bar{x};\eta)\lambda\sqrt{2(\gamma^2+1)}$. The key difference from the main model is that content producer $n$, when choosing $x_n$, now recognizes that
$$\frac{\partial\alpha(\bar{x};\eta)}{\partial x_n} = \frac{1}{N}\alpha_x(\bar{x};\eta).$$
Thus, a producer's content choice affects not only the value of its own information, but also the quality of the AI synthesizer against which it competes.

As in the main analysis, we focus on symmetric equilibria in which content producers' target directions are equally spaced. In this extension, we also maintain the adjacent-rival structure of the no-AI benchmark.\footnote{Specifically, for each consumer in a producer's equilibrium readership, the best rival value is attained by an adjacent producer for whom the term inside the absolute value in Equation~\eqref{eqn:contentproducervoi} is nonnegative. Assumption~\ref{ass:connected} guarantees this structure at $x_0^*$. Because endogenous reliability may lower equilibrium vertical emphasis below $x_0^*$, we maintain this structure directly here.} Theorem~\ref{thm:endogenousAI} characterizes the three equilibrium cases, paralleling those in Theorem~\ref{thm:AIequilibrium}, together with the corresponding equilibrium conditions and comparisons of vertical emphasis. The proof is provided in the Appendix.

\begin{theorem}\label{thm:endogenousAI}
Suppose Assumptions~\ref{ass:connected}, \ref{ass:alpha}, and~\ref{ass:sourcereliability} hold. Any interior symmetric equilibrium takes one of the following three forms.

\begin{enumerate}
    \item If AI does not bind any consumer's outside option, then
    $$
    x^*=x_0^*.
    $$
    All consumers obtain information from content producers, and equilibrium content strategies, prices, and information acquisition choices coincide with those in the benchmark without AI.

    \item Suppose all consumers continue to obtain information from content producers, but AI is the binding outside option for consumers within distance $\hat{\delta}\in(0,h]$ of each producer's target direction. Then the equilibrium pair $(x^*,\hat{\delta})$ satisfies
    $$
    \sqrt{2(\gamma^2+1)}\,\alpha(x^*;\eta) = \gamma\sqrt{x^*} + \sqrt{1-x^*}\cos(2h-\hat{\delta}),
    $$
    and
    $$
    \frac{\gamma h}{2\sqrt{x^*}} - \frac{\sin h}{2\sqrt{1-x^*}} = \frac{\sqrt{2(\gamma^2+1)}\,\hat{\delta}}{N}\alpha_x(x^*;\eta).
    $$
    Moreover,
    $$
    x^*\leq x_0^*,
    $$
    with strict inequality whenever $\alpha_x(x^*;\eta)>0$.

    \item Suppose AI attracts a positive share of consumers. Then there exists a unique cutoff $\delta^*\in(0,h)$ such that consumers within distance $\delta^*$ of a producer's target direction obtain information from that producer, while all remaining consumers use AI. The equilibrium pair $(x^*,\delta^*)$ satisfies
    $$
    \gamma\sqrt{x^*} + \sqrt{1-x^*}\cos\delta^* = \sqrt{2(\gamma^2+1)}\,\alpha(x^*;\eta),
    $$
    and
    $$
    \frac{\gamma\delta^*}{2\sqrt{x^*}} - \frac{\sin\delta^*}{2\sqrt{1-x^*}} = \frac{\sqrt{2(\gamma^2+1)}\,\delta^*}{N}\alpha_x(x^*;\eta).
    $$
    Moreover,
    $$
    x^*<x_0^*.
    $$
\end{enumerate}
\end{theorem}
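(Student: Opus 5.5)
The plan is to reduce each case to a first-order condition in $x_n$ for a single deviating producer, evaluated at a symmetric profile. Fix rivals at common vertical emphasis $x$ with targets equally spaced at $2h$, and let producer $n$ choose $x_n$. By Lemma~\ref{lem:pricing} and the maintained adjacent-rival structure, a consumer at angular distance $\delta\in[0,h]$ from $t_n$ is contested only by an adjacent rival at distance $2h-\delta$ and by AI. Her price is therefore $\lambda\bigl(\gamma\sqrt{x_n}+\sqrt{1-x_n}\cos\delta\bigr)-\max\{V(\delta),v_{AI}(\bar x)\}$, with $V(\delta)=\lambda(\gamma\sqrt{x}+\sqrt{1-x}\cos(2h-\delta))$ and $v_{AI}(\bar x)=\alpha(\bar x;\eta)\lambda\sqrt{2(\gamma^2+1)}$ from Lemma~\ref{lm:valueofinformation}. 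Only the own-value term and $v_{AI}$ depend on $x_n$, the latter through $\bar x$ with $\partial\bar x/\partial x_n=1/N$. By symmetry in $\pm\delta$, expected profit is proportional to $\int_0^{\bar\delta}$ of this price, where $\bar\delta$ is the edge of the readership.

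The key step is an envelope argument at the edges of the readership and at the switch point between binding outside options. At the readership edge the price is zero: this holds at $\delta=h$ against the rival, and at $\delta^*$ against AI in Case~3. At $\hat\delta$ the two outside options coincide, so the integrand is continuous there. Hence differentiating the limits of integration contributes nothing, and
\[
\frac{\partial \Pi_n}{\partial x_n}\propto \int_0^{\bar\delta}\Bigl(\frac{\gamma}{2\sqrt{x_n}}-\frac{\cos\delta}{2\sqrt{1-x_n}}\Bigr)d\delta-\frac{\sqrt{2(\gamma^2+1)}}{N}\,\alpha_x(\bar x;\eta)\,\bigl|\{\delta\le\bar\delta: \text{AI binds}\}\bigr|.
\]
The AI-binding set is empty in Case~1, is $[0,\hat\delta]$ in Case~2, and is all of $[0,\delta^*]$ in Case~3. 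At an interior symmetric equilibrium this derivative must vanish at $x_n=x=\bar x=x^*$.

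Each case then follows from this condition together with an indifference condition.
\begin{itemize}
\item[] \emph{Case~1.} The derivative equals $\frac{\gamma h}{2\sqrt{x}}-\frac{\sin h}{2\sqrt{1-x}}=0$, which gives $x_0^*=\gamma^2h^2/(\gamma^2h^2+\sin^2h)$. Prices and choices coincide with the benchmark because AI never binds.
\item[] \emph{Case~2.} The FOC is the displayed second equation. The first equation states that $\hat\delta$ is where $v_{AI}$ equals the rival's value at distance $2h-\hat\delta$. The left side $g_h(x)=\frac{\gamma h}{2\sqrt{x}}-\frac{\sin h}{2\sqrt{1-x}}$ is strictly decreasing in $x$ and vanishes at $x_0^*$. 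Since the right side is nonnegative, we get $x^*\le x_0^*$, with strict inequality when $\alpha_x>0$.
\item[] \emph{Case~3.} The marginal served consumer is indifferent between producer $n$ at price zero and AI, which gives the first equation; the FOC with $\bar\delta=\delta^*$ gives the second. The cutoff is unique because $\gamma\sqrt{x}+\sqrt{1-x}\cos\delta$ is strictly decreasing in $\delta$ on $(0,h)$, and it lies strictly inside $(0,h)$ because AI attracts a positive share and the readership is connected.
\end{itemize}

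For the comparison in Case~3, note that $g_{\delta^*}(x)=\frac{\gamma\delta^*}{2\sqrt{x}}-\frac{\sin\delta^*}{2\sqrt{1-x}}$ is strictly decreasing in $x$ and vanishes at $\tilde x(\delta^*)=\gamma^2\delta^{*2}/(\gamma^2\delta^{*2}+\sin^2\delta^*)$. Since the right side is nonnegative, $x^*\le\tilde x(\delta^*)$. Because $\delta/\sin\delta$ is strictly increasing on $(0,\pi)$ and $\delta^*<h$, we have $\tilde x(\delta^*)<\tilde x(h)=x_0^*$, so $x^*<x_0^*$ strictly even when $\alpha_x=0$.

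The main obstacle I expect is rigor in the envelope step. A unilateral change in $x_n$ moves $\hat\delta$ and the readership edge, and also shifts $v_{AI}$ for consumers served by the rivals. I must check that the deviator's relevant region and its binding-constraint pattern vary continuously with $x_n$, so that the profit is differentiable at the symmetric point. I would rely on the maintained adjacent-rival structure, and on continuity of the integrand at $\hat\delta$ and its vanishing at the edges, to justify this. Since the theorem only claims necessary conditions for interior equilibria, second-order conditions need not be verified.
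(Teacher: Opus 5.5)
Your proposal is correct and follows essentially the same route as the paper. Both use a single-deviator first-order condition in which the Leibniz boundary terms vanish (zero margin at the readership edge, coinciding outside options at $\hat\delta$), add an extra term $-\frac{\sqrt{2(\gamma^2+1)}}{N}\alpha_x$ times the measure of the AI-binding set, and obtain the comparisons from monotonicity of $\frac{\gamma z}{2\sqrt{x}}-\frac{\sin z}{2\sqrt{1-x}}$ in $x$ together with monotonicity of $X(\gamma,z)$ in $z$.

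The only step the paper makes explicit that you merely assert is why AI binds on all of $[0,\delta^*]$ in Case~3, which is also what makes the three cases exhaustive. Since $\delta^*<h$, $\cos\delta^*>\cos(2h-\delta^*)$, so the marginal consumer's indifference must be with AI. Because the best rival's value increases with distance from the target, $v_{AI}$ then dominates it throughout the readership.
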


Theorem~\ref{thm:endogenousAI} shows that our main content-polarization result is robust to allowing AI reliability to depend endogenously on content production. Whenever the AI affects the market, content producers weakly reduce their vertical emphasis relative to the benchmark without AI, with a strict reduction once the AI attracts consumers. Moreover, source-dependent AI reliability introduces an additional strategic force that can further intensify content polarization. In the main model, producers reduce their vertical emphasis after AI adoption because the loss of marginal consumers shrinks their audiences, increasing the relative return to horizontally targeted content. With endogenous AI reliability, producers have an additional incentive to reduce vertical emphasis: doing so reduces the amount of commonly valuable information available to the AI, lowers its reliability, and thereby weakens it as a competitive outside option. Importantly, this strategic reliability effect can arise even before consumers actually adopt AI, as long as the AI constrains content producers' pricing, as in Case 2. Thus, allowing AI reliability to depend on original content production not only preserves the polarization mechanism identified in the main model, but also introduces an additional force toward polarization.

\section{Conclusion} \label{sec:conclusion}
Our paper studies how AI information synthesis reshapes content markets when content producers respond strategically to its availability. We show that the direct informational benefit of AI need not translate straightforwardly into a less polarized information environment. As AI becomes sufficiently reliable to attract consumers, it changes the audience that content producers compete to serve and, consequently, how they allocate their limited informational capacity. Producers shift away from information that is broadly valuable across consumers and toward more horizontally differentiated content, increasing polarization in the remaining producer-generated content. Understanding the consequences of AI information synthesis therefore requires considering not only what AI produces, but also how the surrounding information ecosystem adapts to it.

Our analysis abstracts from the entry and exit of content producers and takes the number of producers as given. This assumption allows us to isolate how AI competition affects the editorial choices of existing producers, but AI may also reshape the content market along the extensive margin. Incorporating endogenous market structure would therefore be a natural direction for future research.

\singlespacing

\bibliographystyle{chicago} 
\bibliography{literature}

\newpage 
\doublespacing
\appendix
\setcounter{lemma}{0}
\renewcommand{\thelemma}{A\arabic{lemma}}
\begin{center}
    {\Large Appendix}
\end{center}
\setcounter{equation}{0} 
\renewcommand{\theequation}{A\arabic{equation}}

\subsection*{Proof of Lemma \ref{lm:valueofinformation}:}
\vspace{-0.1in}
Recall that the value of information is the ex ante increase in expected utility from making the downstream decision after observing information from a source, relative to making the same decision without acquiring any information.

We first derive the value of information provided by a content producer. Suppose consumer $i$ chooses content producer $n$ and observes signal $s_{in}$. Since both the payoff $\theta_i\cdot\omega$ and the signal $s_{in}$ are jointly normally distributed, the consumer updates the beliefs by normal projection. Because each component of $\omega$ is independently distributed as a standard normal, $\mathrm{Cov}(\theta_i\cdot\omega,s_{in}) =\theta_i\cdot b_n$ and $\mathrm{Var}(s_{in}) =\|b_n\|^2+\mathrm{Var}(\varepsilon_{in}) =2$, where the last equality follows from $\|b_n\|=1$. Therefore, the posterior expectation of the payoff is
\begin{equation}
\mathbb{E}\!\left[\theta_i\cdot\omega\mid s_{in}\right] = \frac{\theta_i\cdot b_n}{2}s_{in}.
\label{eqn:posterior}
\end{equation}

Consumer $i$ takes the downstream action if and only if $\mathbb{E}\!\left[\theta_i\cdot\omega\mid s_{in}\right]>0.$ Hence, the ex ante value of information from content producer $n$ is
$$v_{in} = \mathbb{E}\!\left[\max\!\left(\{\mathbb E[\theta_i\cdot\omega\mid s_{in}],0\right\}\right].$$

By Equation~\eqref{eqn:posterior}, the posterior expectation is itself normally distributed with mean zero and standard deviation $\sigma = |\theta_i\cdot b_n|/\sqrt{2}$. Using the identity $\mathbb{E}[\max(Z,0)] = \sigma/\sqrt{2\pi}$ for any mean-zero normal random variable $Z\sim N(0,\sigma^2)$, we obtain
$$
v_{in}=\frac{|\theta_i\cdot b_n|}{2\sqrt{\pi}}.
$$

Substituting $\theta_i = (\gamma,\cos t_i,\sin t_i)$ and $b_n =\left(\sqrt{x_n},\sqrt{1-x_n}\cos t_n,\sqrt{1-x_n}\sin t_n\right)$, gives
$$
\theta_i\cdot b_n = \gamma\sqrt{x_n} + \sqrt{1-x_n} \left(\cos t_i\cos t_n + \sin t_i\sin t_n\right) = \gamma\sqrt{x_n} +\sqrt{1-x_n}\cos(t_i-t_n).
$$

Defining $\lambda=1/(2\sqrt{\pi})$, the value of information from content producer $n$ becomes
$$
v_{in} = \lambda \left|\gamma\sqrt{x_n} + \sqrt{1-x_n}\cos(t_i-t_n) \right|.
$$

If the consumer chooses AI, the consumer observes an AI-generated output about the event state, but does not observe whether AI has successfully recovered the true state or has instead generated pure noise. With probability $\alpha$, $a_i^{AI} = \omega$; with probability $1-\alpha$, $a_i^{AI} = \zeta$, where $\zeta$ is an independent draw from the prior distribution of $\omega$. The realization of $a_i^{AI}$ carries no information about the success event, so the posterior probability of success given the output equals $\alpha$. Averaging over the two possibilities,
$$\mathbb{E}\left[\theta_i \cdot \omega \mid a_i^{AI}\right]= \alpha\,\mathbb{E}\left[\theta_i \cdot \omega \mid a_i^{AI},\text{success}\right] + (1-\alpha)\,\mathbb{E}\left[\theta_i \cdot \omega \mid a_i^{AI},\text{failure}\right]= \alpha\,\theta_i \cdot a_i^{AI}.$$
The consumer therefore takes the downstream action if and only if $\theta_i \cdot a_i^{AI} > 0$. Ex ante, $\theta_i \cdot a_i^{AI}$ equals $\theta_i \cdot \omega$ under success and $\theta_i \cdot\zeta$ under failure, both distributed $N(0, \|\theta_i\|^2)$, so the posterior expectation is a mean-zero normal random variable with variance $\alpha^2\|\theta_i\|^2 = \alpha^2(\gamma^2+1)$. Using the half-normal identity, the value of information from AI is
$$
v_{i,{AI}} = \frac{\alpha\sqrt{\gamma^2+1}}{\sqrt{2\pi}} = \alpha\lambda\sqrt{2(\gamma^2+1)}.
$$

\subsection*{Proof of Theorem \ref{thm:AIequilibrium}:}
By Equation \eqref{eqn:aivoi}, the value of information provided by AI is increasing in the reliability parameter $\alpha$. Hence, when $\alpha$ is sufficiently small, corresponding to Case~1 of Theorem~\ref{thm:AIequilibrium}, AI does not bind as a competitive alternative option for consumers. In this case, content producers compete as if AI were absent. 

Consider a representative content producer $n$ with target direction $t_n^*=0$. In a symmetric equilibrium, the target directions $t_1^*,\dots,t_N^*$ are equally spaced with adjacent producers separated by an angular distance of $2h$. Also, all other content producers choose the same vertical component $x$. One can verify that, under Assumption~\ref{ass:connected}, each content producer competes only with its adjacent content producers. Therefore, by Lemma~\ref{lem:pricing}, the pricing strategies and consumers' information-acquisition choices are as described in Case~1 of Theorem~\ref{thm:AIequilibrium}.

Consider a unilateral deviation in $x_n$, holding the other producers' choices fixed at $x$. Let $O_n(t; x)$ denote the value of type $t$'s best alternative to producer $n$ (independent of $x_n$, since it depends only on AI and the other producers). Under Assumption~\ref{ass:connected}, the readership is a connected interval. Moreover, at the symmetric rival profile, both $v_n(t; x_n)$ and $O_n(t; x)$ are symmetric around $t = 0$. The readership can therefore be written as $[-\rho_n(x_n; x),\, \rho_n(x_n; x)]$, whose right-hand endpoint $\rho_n(x_n; x)$ moves with the deviation. By continuity, the marginal consumer satisfies
\begin{equation}
v_n\left(\rho_n(x_n; x); x_n\right) = O_n\left(\rho_n(x_n; x); x\right).
\label{eqn:marginal}
\end{equation}
By symmetry, producer $n$'s expected profit is
\begin{align*}
\Pi_n(x_n; x) &= \frac{1}{\pi}\int_0^{\rho_n(x_n; x)}\left[v_n(t; x_n) - O_n(t; x)\right] dt \\
&= \frac{1}{\pi}\int_0^{\rho_n(x_n; x)} \left[\lambda\left(\gamma\sqrt{x_n} + \sqrt{1-x_n}\cos t\right) - O_n(t; x)\right] dt.
\end{align*}
The absolute-value operator in Equation~\eqref{eqn:contentproducervoi} can be omitted in the second equality under Assumption~\ref{ass:connected}.

Differentiating $\Pi_n(x_n; x)$ with respect to $x_n$ by Leibniz's rule gives
$$\frac{\partial \Pi_n(x_n; x)}{\partial x_n} = \frac{1}{\pi}\int_0^{\rho_n(x_n; x)}
\frac{\partial v_n(t; x_n)}{\partial x_n}\, dt+ \frac{1}{\pi}\left[v_n(\rho_n; x_n) - O_n(\rho_n; x)\right]\frac{\partial \rho_n(x_n; x)}{\partial x_n}.$$
The second term is zero by Equation~\eqref{eqn:marginal}. Therefore,
\begin{equation}
\frac{\partial \Pi_n(x_n; x)}{\partial x_n}= \frac{\lambda}{\pi}\int_0^{\rho_n(x_n; x)}
\left[\frac{\gamma}{2\sqrt{x_n}} - \frac{\cos t}{2\sqrt{1-x_n}}\right] dt.
\label{eqn:FOCdev}
\end{equation}
In a symmetric equilibrium of Case~1, $x_n = x$; moreover, the AI synthesizer does not bind, so $O_n(t; x)$ is the value offered by the adjacent producer and, by the symmetry between adjacent producers, $\rho_n(x; x) = h$. The first-order condition is therefore
$$\left.\frac{\partial \Pi_n(x_n; x)}{\partial x_n}\right|_{x_n = x}= \frac{\lambda}{\pi}\int_0^h\left[\frac{\gamma}{2\sqrt{x}} - \frac{\cos t}{2\sqrt{1-x}}\right] dt= \frac{\lambda}{\pi}\left[\frac{\gamma h}{2\sqrt{x}}- \frac{\sin h}{2\sqrt{1-x}}\right] = 0.$$
Hence,
$$x^* = \frac{\gamma^2 h^2}{\gamma^2 h^2 + \sin^2 h}.$$

Because the readership cutoff varies with $x_n$, the second derivative must account for the induced change in $\rho_n(x_n; x)$. Differentiating Equation~\eqref{eqn:FOCdev} once more, with $\partial\rho_n/\partial x_n$ obtained from Equation~\eqref{eqn:marginal} by the implicit function theorem (the adjacent producer being the binding outside option at the marginal consumer), and substituting the first-order condition yields, at the symmetric candidate profile,
$$\left.\frac{\partial^2 \Pi_n(x_n; x)}{\partial x_n^2}\right|_{x_n = x = x^*}= \frac{\lambda}{8\pi h^2 (1-x^*)^{3/2}\sin h} \left[(\sin h - h\cos h)^2 - \frac{2h^2\sin^2 h}{x^*}\right] < 0.$$
The inequality follows from $0 < \sin h - h\cos h < h\sin h$, which holds because $h = \pi/N \le \pi/2$, together with $0 < x^* < 1$. Thus, the first-order condition characterizes producer $n$'s optimal vertical dimension choice in Case~1.

Notice that, in the equilibrium in which content producers compete as if the AI were absent, consumers located exactly at a producer's target direction are charged the highest price. In particular, a consumer with $t_i=t_n^*$ has the strongest preference for content producer $n$, allowing producer $n$ to extract the largest incremental value relative to the consumer's best adjacent alternative. By contrast, consumers located at the boundary between two adjacent content producers, $t_i=t_n^*\pm h$, are indifferent between the two producers and are therefore charged the lowest price, namely zero.

This pricing pattern also implies that consumer surplus is lowest for consumers located at $t_i=t_n^*$. For such a consumer, the surplus from purchasing from content producer $n$ is equal to the value of the best adjacent alternative:
$$v_{in}(x^*,t_n^*)-p_n^*(t_n^*) = V_{i,-n}(x^*,t_{-n}^*) =\lambda\left( \gamma\sqrt{x^*} + \sqrt{1-x^*}\cos(2h)\right).$$
Here, the adjacent content producers are located at distance $2h$ from $t_n^*$. By contrast, a consumer located at the boundary $t_i=t_n^*\pm h$ receives surplus
$$\lambda\left(\gamma\sqrt{x^*} + \sqrt{1-x^*}\cos h\right).$$
Since $\cos h>\cos(2h)$, the boundary consumer obtains higher surplus than the consumer located exactly at the producer's target direction. Therefore, the consumers located at $t_i=t_n^*$ are the first to be attracted by AI as $\alpha$ increases.

It follows that the threshold $\underline{\alpha}(\gamma,N)$ is determined by equating the value of AI to the consumer surplus of a consumer located at $t_i=t_n^*$:
$$
\underline{\alpha}(\gamma,N)\sqrt{2(\gamma^2+1)}
= \gamma\sqrt{x^*} + \sqrt{1-x^*}\cos(2h).
$$
Equivalently,
$$
\underline{\alpha}(\gamma,N) = \frac{\gamma\sqrt{x^*}+\sqrt{1-x^*}\cos(2h)}{\sqrt{2(\gamma^2+1)}} = \frac{\gamma^2 h+\sin h\,\cos 2h}{\sqrt{2(\gamma^2+1)\bigl(\gamma^2 h^2+\sin^2 h\bigr)}}.
$$

We next consider Case~2 of Theorem~\ref{thm:AIequilibrium}. Once $\alpha>\underline{\alpha}(\gamma,N)$, the AI synthesizer becomes an attractive alternative for consumers who obtain relatively low surplus from content producers. As discussed above, these consumers are located closest to a content producer's target direction, namely at $t_i=t_n^*$. By the pricing rule in Lemma~\ref{lem:pricing}, content producers respond by lowering prices in order to retain these consumers.

More specifically, for a given $\alpha\in(\underline{\alpha}(\gamma,N),\overline{\alpha}(\gamma,N)]$, there exists a unique cutoff $\hat{\delta}\in(0,h]$ such that the AI synthesizer binds as the relevant alternative option for consumers with $|t_i-t_n^*|\leq \hat{\delta}$, while the best adjacent content producer remains the relevant alternative option for consumers with $\hat{\delta}<|t_i-t_n^*|\leq h$. Thus, for consumers close to producer $n$'s target direction, content producer $n$ lowers its price relative to Case~1. For consumers closer to the boundary between adjacent producers, the pricing rule is unchanged from Case~1.

The cutoff $\hat{\delta}$ is determined by the consumer who is exactly indifferent between using AI and relying on the best adjacent content producer as the alternative option. On the right half of producer $n$'s market, this condition is
$$
\alpha\sqrt{2(\gamma^2+1)} = \gamma\sqrt{x^*} + \sqrt{1-x^*}\cos(2h-\hat{\delta}).
$$
Since the right-hand side is increasing in $\hat{\delta}$ on $[0,h]$, the cutoff is unique. 

We next show that the price reductions near each target direction leave producers' vertical emphasis unchanged. Consider first the interior of Case~2, where $0<\hat{\delta}<h$. As in Case~1, fix producer $n$'s target direction at zero and consider a unilateral change in $x_n$, holding AI reliability fixed and letting all rivals choose a common vertical emphasis $x$. Let $V_{-n}(t;x)$ denote the value offered to type $t$ by the best rival producer, which on the right half of producer $n$'s market equals $\lambda(\gamma\sqrt{x}+\sqrt{1-x}\cos(2h-t))$, and let $O_n(t;x)=\max\{V_{-n}(t;x),v_{AI}\}$. Written at this rival profile, the cutoff condition above becomes $V_{-n}(\hat{\delta};x)=v_{AI}$ and defines $\hat{\delta}=\hat{\delta}(x;\alpha)$. Since producer $n$'s own value of information does not enter this condition, $\hat{\delta}$ does not depend on $x_n$.

At the symmetric profile, producer $n$ strictly outperforms AI throughout $[-\hat{\delta},\hat{\delta}]$, and, since $\hat{\delta}<h$, the adjacent producers strictly outperform AI at the readership boundaries $\pm h$. By continuity, both strict inequalities persist after sufficiently small changes in $x_n$; the second continues to hold at the perturbed readership boundaries, which remain close to $\pm h$. Producer $n$ therefore retains every consumer whose price is constrained by AI, and its marginal consumers continue to be priced against the adjacent producers. Hence, under any such deviation, producer $n$'s readership is the same as in the economy without AI. Denote this common readership by $[-\rho_n(x_n;x),\rho_n(x_n;x)]$.

Let $\Pi_n^0(x_n;x)$ denote producer $n$'s profit without AI, holding rivals' strategies fixed at the same profile. For the deviations just described, Lemma~\ref{lem:pricing} gives
\begin{align*}
    \Pi_n(x_n;x)&=\frac{1}{\pi}\int_0^{\hat{\delta}}\left[v_n(t;x_n)-v_{AI}\right]\,dt+\frac{1}{\pi}\int_{\hat{\delta}}^{\rho_n(x_n;x)}\left[v_n(t;x_n)-V_{-n}(t;x)\right]\,dt\\
&=\Pi_n^0(x_n;x)-C_n(x;\alpha),
\end{align*}
where
$$C_n(x;\alpha)\equiv\frac{1}{\pi}\int_0^{\hat{\delta}(x;\alpha)}\left[v_{AI}-V_{-n}(t;x)\right]\,dt.
$$
The term $C_n(x;\alpha)$ is the aggregate price concession induced by AI. Neither its integrand nor its integration limit depends on $x_n$. Thus, AI locally subtracts a constant from producer $n$'s profit without changing the relative profitability of its editorial choices.

In particular,
$$\frac{\partial\Pi_n(x_n;x)}{\partial x_n}=\frac{\partial\Pi_n^0(x_n;x)}{\partial x_n}.$$
As in Case~1, movements of the readership boundary contribute no first-order profit because
$$v_n(\rho_n(x_n;x);x_n)=V_{-n}(\rho_n(x_n;x);x).$$
Applying Equation~\eqref{eqn:FOCdev} and then imposing symmetry, $x_n=x$ and $\rho_n(x;x)=h$, therefore yields
$$\left.\frac{\partial\Pi_n(x_n;x)}{\partial x_n}\right|_{x_n=x}=\frac{\lambda}{\pi}\left[\frac{\gamma h}{2\sqrt{x}}-\frac{\sin h}{2\sqrt{1-x}}\right]=0.$$
This is exactly the first-order condition in the benchmark without AI, with the unique solution
$$x^*=x_0^*=\frac{\gamma^2h^2}{\gamma^2h^2+\sin^2h}.$$
Because the two profit functions differ locally only by a constant, the strict second-order condition established in Case~1 also applies whenever $\hat{\delta}<h$, so $x^*=x_0^*$ throughout the interior of Case~2.

Case~2 ends when $\hat{\delta}$ reaches $h$, that is, when AI becomes sufficiently attractive that even the boundary consumers located at $t_i=t_n^*\pm h$ are indifferent between the content producer and the AI. Beyond this point, even setting the price to zero is no longer sufficient for content producers to retain these consumers. Therefore, the upper threshold $\overline{\alpha}(\gamma,N)$ is determined by
$$\overline{\alpha}(\gamma,N)\sqrt{2(\gamma^2+1)}=\gamma\sqrt{x^*}+\sqrt{1-x^*}\cos h.$$
Equivalently, we obtain
$$\overline{\alpha}(\gamma,N)=\frac{\gamma^2h+\sin h\cos h}{\sqrt{2(\gamma^2+1)\left(\gamma^2h^2+\sin^2h\right)}}.$$

We finally consider Case~3. Suppose that $\alpha>\overline{\alpha}(\gamma,N)$. Under Assumption~\ref{ass:alpha}, AI is sufficiently attractive to capture some consumers, but not so attractive as to fully displace content producers. Hence, there exists a unique threshold $\delta^*\in(0,h)$ such that consumers with $|t_i-t_n^*|\leq \delta^*$ choose content producer $n$, whereas consumers with $\delta^*<|t_i-t_n^*|\leq h$ choose AI.

The marginal consumer located at $|t_i-t_n^*|=\delta^*$ is indifferent between producer $n$ and AI. Therefore, $\delta^*$ satisfies
$$
\alpha\sqrt{2(\gamma^2+1)}=\gamma\sqrt{x^*}+\sqrt{1-x^*}\cos\delta^*.
$$

At the symmetric equilibrium, the target directions remain unchanged and are equally spaced around the horizontal circle. Regarding the vertical emphasis $x_n$, the AI synthesizer is now the binding outside option for every consumer that producer $n$ serves, $O_n(t; x) = v_{AI}$, and producer $n$'s profit from choosing vertical emphasis $x_n$ when all other producers choose $x$ is
$$\Pi_n(x_n; x) = \frac{\lambda}{\pi}\int_0^{\rho_n(x_n; x)}\left[\gamma\sqrt{x_n} + \sqrt{1-x_n}\cos t- \alpha\sqrt{2(\gamma^2+1)}\right] dt,$$
where the readership cutoff solves the indifference condition
$$\gamma\sqrt{x_n} + \sqrt{1-x_n}\cos\rho_n(x_n; x)
= \alpha\sqrt{2(\gamma^2+1)},$$
with $\rho_n(x^*; x^*) = \delta^*$. Differentiating $\Pi_n(x_n; x)$ with respect to $x_n$ and applying Leibniz's rule, the boundary term vanishes because the integrand is zero at $t=\rho_n(x_n; x)$. Equation~\eqref{eqn:FOCdev} therefore applies, and evaluating it at the symmetric profile yields
$$x^* = \frac{\gamma^2(\delta^*)^2}{\gamma^2(\delta^*)^2 +\sin^2\delta^*}.$$
The second-order condition is obtained as in Case~1, except that $\partial\rho_n/\partial x_n$ now follows from the indifference condition, in which the outside option does not vary with $t$. At the symmetric candidate profile,
$$\left.\frac{\partial^2 \Pi_n(x_n; x)}{\partial x_n^2}\right|_{x_n = x = x^*}= \frac{\lambda}{4\pi(\delta^*)^2(1-x^*)^{3/2}\sin\delta^*}\left[(\sin\delta^* - \delta^*\cos\delta^*)^2- \frac{(\delta^*)^2\sin^2\delta^*}{x^*}\right] < 0.$$
The inequality follows from $0 < \sin\delta^* - \delta^*\cos\delta^* 
<\delta^*\sin\delta^*$, which holds since $\delta^* < h \le \pi/2$, together with $0 < x^* < 1$. Thus, the first-order condition characterizes producer $n$'s optimal vertical dimension choice in Case~3 as well.

It remains to establish that the equilibrium pair $(x^*, \delta^*)$ exists and is unique. Substituting the first-order condition into the marginal consumer's indifference condition, using
$$\gamma\sqrt{x^*} = \frac{\gamma^2\delta^*}{\sqrt{\gamma^2(\delta^*)^2+ \sin^2\delta^*}}, \quad\sqrt{1-x^*} = \frac{\sin\delta^*}{\sqrt{\gamma^2(\delta^*)^2
+ \sin^2\delta^*}},$$
shows that any solution of the two equilibrium conditions must satisfy
$\alpha = \Phi(\gamma, \delta^*)$, where
\begin{equation}
\Phi(\gamma, z) \equiv \frac{\gamma^2 z + \sin z\cos z}{\sqrt{2(\gamma^2+1)\left(\gamma^2 z^2 + \sin^2 z\right)}}.
\label{eqn:Phi}
\end{equation}
Conversely, any $\delta^* \in (0, h]$ with $\alpha = \Phi(\gamma,\delta^*)$, together with the $x^*$ given by the first-order condition, solves both conditions. Solutions are therefore in one-to-one correspondence with roots of $\Phi(\gamma, \cdot) = \alpha$ on $(0, h]$. Direct differentiation gives
$$\frac{\partial \Phi(\gamma, z)}{\partial z}= \frac{\gamma^2 q(z) - \sin^4 z}{\sqrt{2(\gamma^2+1)}\left(\gamma^2 z^2 + \sin^2 z\right)^{3/2}},\quad \text{where}\quad q(z) \equiv z^2\cos(2z) - z\sin(2z) + \sin^2 z.$$
Since $q(z) = (\sin z - z\cos z)^2 - z^2\sin^2 z$, the inequalities $0 < \sin z - z\cos z < z\sin z$ on $(0, \pi/2]$ imply $q(z) < 0$ there; as $h = \pi/N \le \pi/2$, the function $\Phi(\gamma, \cdot)$ is continuous and strictly decreasing on $(0, h]$. Its boundary values are $\Phi(\gamma, h) = \overline{\alpha}(\gamma, N)$, by the expression for $\overline{\alpha}(\gamma, N)$ derived above, and $\lim_{z \to 0^+}\Phi(\gamma, z) = 1/\sqrt{2}$, since $\sin z = z +O(z^3)$ implies that the numerator equals $(\gamma^2+1)z + O(z^3)$ while $\gamma^2 z^2 + \sin^2 z = (\gamma^2+1)z^2 + O(z^4)$. In particular, $\overline{\alpha}(\gamma, N) < 1/\sqrt{2}$, so the Case-3 range $(\overline{\alpha}(\gamma, N), 1/\sqrt{2})$ is nonempty. For every $\alpha$ in this range, the intermediate value theorem and strict monotonicity deliver exactly one $\delta^* \in (0, h)$ with $\Phi(\gamma, \delta^*) = \alpha$, and the first-order condition then uniquely determines $x^*$. This completes the proof.

\subsection*{Proof of Proposition~\ref{prop:AI_noAI_comparison}:}
Let
$$X(\gamma,z) \equiv \frac{\gamma^2 z^2}{\gamma^2 z^2 + \sin^2 z}.$$
Differentiating with respect to $z$ gives
$$\frac{\partial X(\gamma,z)}{\partial z} = \frac{2\gamma^2 z \sin z \left(\sin z - z\cos z\right)}{\left(\gamma^2 z^2 + \sin^2 z\right)^2} > 0$$
for $z \in (0, \pi/2]$, because $\sin z - z\cos z > 0$ on this interval. Every evaluation point below lies in this range, since $h = \pi/N \le \pi/2$.

Recall from the proof of Theorem~\ref{thm:AIequilibrium} that, in Case~3,
$$\alpha = \Phi(\gamma,\delta^*), \quad x^* = X(\gamma,\delta^*),$$
where $\Phi$ is defined in Equation~\eqref{eqn:Phi}. That proof also establishes that $\partial\Phi(\gamma,z)/\partial z < 0$ for $z \in (0,h]$, and that $\overline{\alpha}(\gamma,N) = \Phi(\gamma,h)$. As shown in the proof of Theorem~\ref{thm:AIequilibrium}, whose Case~1 derivation does not involve AI and therefore applies to the benchmark economy as well, the symmetric equilibrium level of vertical emphasis in the benchmark economy is
$$x_0^* = X(\gamma,h) = \frac{\gamma^2 h^2}{\gamma^2 h^2 + \sin^2 h}.$$
We now prove each point of the proposition.

Point~1 follows directly from Theorem~\ref{thm:AIequilibrium}: in all three regimes, the equilibrium target directions are equally spaced at angular distance $2h$, as in the benchmark economy without AI.

For Point~2, if $\alpha \le \overline{\alpha}(\gamma,N)$, then Theorem~\ref{thm:AIequilibrium} gives $x^* = x_0^*$. If instead $\alpha > \overline{\alpha}(\gamma,N)$, then $\delta^* < h$. Since $X(\gamma,z)$ is strictly increasing in $z$,
$$x^* = X(\gamma,\delta^*) < X(\gamma,h) = x_0^*.$$

For Point~3, suppose that $\alpha > \overline{\alpha}(\gamma,N)$. The editorial gap is
$$x_0^* - x^* = X(\gamma,h) - X(\gamma,\delta^*).$$
Implicitly differentiating $\alpha = \Phi(\gamma,\delta^*)$ with respect to $\alpha$ gives
$$1 = \frac{\partial\Phi(\gamma,\delta^*)}{\partial z}\frac{\partial\delta^*}{\partial\alpha},$$
and hence
$$\frac{\partial\delta^*}{\partial\alpha} = \left[\frac{\partial\Phi(\gamma,\delta^*)}{\partial z}\right]^{-1} < 0.$$
Thus, a more reliable AI reduces the audience retained by each content producer. Since $X$ is strictly increasing,
$$\frac{\partial(x_0^* - x^*)}{\partial\alpha} = -\frac{\partial X(\gamma,\delta^*)}{\partial z}\frac{\partial\delta^*}{\partial\alpha} > 0.$$
Therefore, the editorial gap is strictly increasing in $\alpha$.

To study the effect of $N$, hold $\alpha$ and $\gamma$ fixed and consider $N_2 > N_1$ such that $\alpha > \overline{\alpha}(\gamma,N_j)$, $j \in \{1,2\}$. The cutoff $\delta^*$ is the unique solution to $\alpha = \Phi(\gamma,\delta^*)$, which does not depend directly on $N$. Hence, $\delta^*$, and therefore $X(\gamma,\delta^*)$, are the same in the two economies. By contrast, $h_2 = \pi/N_2 < \pi/N_1 = h_1$. Because $X(\gamma,z)$ is strictly increasing in $z$,
$$X(\gamma,h_2) - X(\gamma,\delta^*) < X(\gamma,h_1) - X(\gamma,\delta^*).$$
Thus, among economies that are both in Case~3, the editorial gap is strictly decreasing in $N$.

Finally, consider Point~4. Because $\overline{\alpha}(\gamma,N) = \Phi(\gamma,h)$ and $h = \pi/N$ is fixed when differentiating with respect to $\gamma$,
$$\frac{\partial\overline{\alpha}(\gamma,N)}{\partial\gamma} = \frac{\partial\Phi(\gamma,h)}{\partial\gamma}.$$
More generally,
$$\frac{\partial\Phi(\gamma,z)}{\partial\gamma} = \frac{\gamma\left\{\gamma^2 z\left[(z\cos z - \sin z)^2 + z^2\sin^2 z\right] + \sin z\left[2z\sin z - \cos z\left(z^2 + \sin^2 z\right)\right]\right\}}{\sqrt{2}(\gamma^2+1)^{3/2}\left(\gamma^2 z^2 + \sin^2 z\right)^{3/2}}.$$
For $z \in (0, \pi/2]$, the first term inside braces is strictly positive. Moreover,
$$2z\sin z - \cos z\left(z^2 + \sin^2 z\right)= z\left(\sin z - z\cos z\right) + \sin z\left(z - \sin z\cos z\right)> 0,$$
because $\sin z - z\cos z > 0$ and $z - \sin z\cos z > 0$ on this interval; the second inequality holds since $\sin z\cos z = \sin(2z)/2 < z$. It follows that $\partial\Phi(\gamma,z)/\partial\gamma > 0$. Evaluating at $z = h$ therefore gives
$$\frac{\partial\overline{\alpha}(\gamma,N)}{\partial\gamma} > 0.$$
Hence, $\overline{\alpha}(\gamma,N)$ is strictly increasing in $\gamma$, which completes the proof.

\subsection*{Proof of Proposition~\ref{prop:consumeralignment}:}
We first derive consumer evaluation alignment in the absence of AI and then characterize alignment when AI is adopted.

Consider two consumers $i$ and $j$ who obtain information from content producers $n$ and $m$, respectively. Under a symmetric equilibrium with vertical emphasis $x$, their posterior evaluations are
$$
z_{P,i}=\frac{\theta_i\cdot b_n}{2}s_{in}, \quad z_{P,j}=\frac{\theta_j\cdot b_m}{2}s_{jm}.
$$

For consumers in the corresponding producers' equilibrium audiences, the coefficients multiplying the signals are positive and therefore do not affect their correlation. Hence,
$$
\operatorname{Corr}(z_{P,i},z_{P,j}\mid t_i,t_j) = \operatorname{Corr}(s_{in},s_{jm}).
$$

Because the idiosyncratic signal noises are independent across consumers and independent of the underlying state,
$$
\operatorname{Cov}(s_{in},s_{jm})=b_n\cdot b_m, \quad
\operatorname{Var}(s_{in}) = \operatorname{Var}(s_{jm}) =2.
$$
It follows that
$$
\operatorname{Corr}(z_{P,i},z_{P,j}\mid t_i,t_j) = \frac{b_n\cdot b_m}{2} = \frac{x+(1-x)\cos(t_n^*-t_m^*)}{2}.
$$

Under the symmetric equilibrium, producers' target directions are equally spaced. Therefore,
$$
\mathbb E\left[\cos\left(t_n^*-t_m^*\right)\right] = \frac{1}{N^2}\sum_{n=1}^N\sum_{m=1}^N\cos(t_n^*-t_m^*) = \left|\frac{1}{N}\sum_{n=1}^N e^{it_n^*}\right|^2=0.
$$
Therefore, 
$$
\mathbb E\!\left[\operatorname{Corr}(z_{P,i},z_{P,j}\mid t_i,t_j)\right] = \frac{x}{2}.
$$

Without AI, all consumers obtain information from content producers and
$$
x_0^* = \frac{\gamma^2h^2}{\gamma^2h^2+\sin^2h}.
$$
Hence,
$$
\mathcal S_0 = \frac{x_0^*}{2} = \frac{\gamma^2h^2}{2(\gamma^2h^2+\sin^2h)}.
$$

We next derive the consumer evaluation alignment in the presence of AI. By Theorem~\ref{thm:AIequilibrium}, for
$0<\alpha\leq\overline{\alpha}(\gamma,N)$, AI affects
neither producers' content strategies nor consumers' information
acquisition choices: every consumer continues to obtain information from
content producers, and $x^*=x_0^*$. It follows immediately that
$$
\mathcal{S}_{AI} = \frac{x^*}{2} = \frac{x_0^*}{2} = \mathcal{S}_0,
$$
which establishes the first part of the proposition.

We next consider the case in which $\overline{\alpha}(\gamma,N)<\alpha<1/\sqrt{2}$. In this region, Theorem~\ref{thm:AIequilibrium} shows that AI attracts a positive share of consumers. Let
$$
\mu_P=\frac{\delta^*}{h}, \quad \mu_{AI}=1-\frac{\delta^*}{h}
$$
denote the equilibrium shares of consumers obtaining information from content producers and AI, respectively. We first derive evaluation alignment for each of the three possible pairs of information sources.

If both consumers use content producers, we have derived that 
$$
\mathbb E\!\left[\operatorname{Corr}(z_{P,i},z_{P,j}\mid t_i,t_j)\right] = \frac{x^*}{2}.
$$
For a producer user $i$ served by producer $n$ and an AI user $j$, the conditional correlation is
$$
\operatorname{Corr}(z_{P,i},z_{AI,j}\mid t_i,t_j)
=\frac{\alpha\left(\gamma\sqrt{x^*}+\sqrt{1-x^*}\cos(t_j-t_n^*)\right)}{\sqrt{2(\gamma^2+1)}}.
$$
For two AI users, Assumption~\ref{ass:informationacrossconsumers} gives
$$
\operatorname{Corr}(z_{AI,i},z_{AI,j}\mid t_i,t_j)
=\frac{\alpha^2\left(\gamma^2+\cos(t_i-t_j)\right)}{\gamma^2+1}.
$$
Conditional on the consumers' channel choices, the horizontal terms average to zero, because the audience of each channel is invariant under rotation by $2h$. The average conditional correlations for producer-AI pairs and AI-AI pairs are therefore
$$
\frac{\alpha\gamma\sqrt{x^*}}{\sqrt{2(\gamma^2+1)}}\quad\text{and}\quad\frac{\alpha^2\gamma^2}{\gamma^2+1},
$$
respectively.

Because two independently drawn consumers are both producer users with probability $\mu_P^2$, form a mixed pair of one producer user and one AI user with probability $2\mu_P\mu_{AI}$, and are both AI users with probability $\mu_{AI}^2$, we then have
$$
\mathcal{S}_{AI}
=\mu_P^2\frac{x^*}{2}
+2\mu_P\mu_{AI}\frac{\alpha\gamma\sqrt{x^*}}{\sqrt{2(\gamma^2+1)}}
+\mu_{AI}^2\frac{\alpha^2\gamma^2}{\gamma^2+1}
=\left[\mu_P\sqrt{\frac{x^*}{2}}+\mu_{AI}\frac{\alpha\gamma}{\sqrt{\gamma^2+1}}\right]^2.
$$
By Case 3(d) of Theorem~\ref{thm:AIequilibrium} and $\sin\delta^*<\delta^*$,
$$
x^*=\frac{\gamma^2(\delta^*)^2}{\gamma^2(\delta^*)^2+\sin^2\delta^*}>\frac{\gamma^2}{\gamma^2+1}.
$$
Together with Assumption~\ref{ass:alpha} and Point 2 of Proposition \ref{prop:AI_noAI_comparison}, this implies
$$\frac{\alpha\gamma}{\sqrt{\gamma^2+1}}<\frac{\gamma}{\sqrt{2(\gamma^2+1)}}<\sqrt{\frac{x^*}{2}}<\sqrt{\frac{x_0^*}{2}}=\sqrt{\mathcal{S}_0}.
$$
Since $\mu_P+\mu_{AI}=1$ and both shares are nonnegative,
$$\sqrt{\mathcal{S}_{AI}}=\mu_P\sqrt{\frac{x^*}{2}}+\mu_{AI}\frac{\alpha\gamma}{\sqrt{\gamma^2+1}}<\sqrt{\mathcal{S}_0}.
$$
Therefore, $\mathcal{S}_{AI}<\mathcal{S}_0$, which completes the proof.

\subsection*{Proof of Lemma~\ref{lm:personalizedAI}:}
Let $G_i$ denote the event that AI successfully recovers the underlying state for consumer $i$, so that $\Pr(G_i)=\alpha$. Conditional on successful synthesis, $a_i^{AI}=\omega$, and the personalized signal is $s_i^{AI} \mid G_i = \theta_i\cdot\omega/\sqrt{\gamma^2+1} + \varepsilon_i^{AI} \sim N(0,2)$. Conditional on failed synthesis, $a_i^{AI}=\zeta_i$, where $\zeta_i$ is an independent draw from the prior distribution of $\omega$. Since $\zeta_i$ has the same marginal distribution as $\omega$, $s_i^{AI}\mid G_i^c = \theta_i\cdot\zeta_i/\sqrt{\gamma^2+1} + \varepsilon_i^{AI} \sim N(0,2)$. Thus, $s_i^{AI}$ has the same distribution under successful and failed synthesis. Observing the personalized signal therefore does not reveal whether synthesis succeeded, and $\Pr(G_i\mid s_i^{AI})=\alpha$.

Conditional on successful synthesis, Gaussian updating gives
$$
\mathbb E[\theta_i\cdot\omega\mid s_i^{AI},G_i]=\frac{\sqrt{\gamma^2+1}}{2}s_i^{AI}.
$$

Under failed synthesis, $s_i^{AI}$ is independent of $\omega$, and hence
$$
\mathbb E[\theta_i\cdot\omega\mid s_i^{AI},G_i^c]=0.
$$

Thus, consumer $i$'s posterior evaluation after observing the personalized AI signal is
$$
\mathbb E[\theta_i\cdot\omega\mid s_i^{AI}] = \frac{\alpha\sqrt{\gamma^2+1}}{2}s_i^{AI}.
$$

The ex ante value of information provided by the personalized AI synthesizer is therefore
$$
v_{i,AI}^{\mathrm{pers}} = \mathbb E\left[\max\left\{\mathbb E[\theta_i\cdot\omega\mid s_i^{AI}],0\right\}\right] = \alpha\lambda\sqrt{\gamma^2+1},
$$
where $\lambda=1/(2\sqrt{\pi})$.

\subsection*{Proof of Theorem~\ref{thm:endogenousAI}:}
We first characterize how source-dependent AI reliability affects a content producer's incentive to choose vertical emphasis and then apply this characterization to each of the three equilibrium cases.

Following the main model, consumer $i$'s relevant outside option when considering content producer $n$ is $O_{in} = \max\left\{V_{i,-n},\,v_{AI}(\bar{x};\eta)\right\}$, where $v_{AI}(\bar{x};\eta) = \lambda\sqrt{2(\gamma^2+1)}\,\alpha(\bar{x};\eta)$.
As in the main model, content producer $n$ extracts the difference between the consumer's value of its content and her best outside option whenever this difference is positive. That is, $p_n^*(t_i) = \max\left\{v_{in}-O_{in},\,0\right\}$.

Let $R_n$ denote content producer $n$'s equilibrium audience, and let
$$B_n = \left\{t_i\in R_n: v_{AI}(\bar{x};\eta)\geq V_{i,-n}\right\}$$
denote the subset of its audience for whom AI is the binding outside option.

At a symmetric equilibrium, write $u=|t_i-t_n^*|$ for a reader's angular distance from producer $n$'s target. Under the maintained structure, for $t_i\in R_n$,
$$
V_{i,-n}=\lambda\left[\gamma\sqrt{x^*}+\sqrt{1-x^*}\cos(2h-u)\right],
$$
which is strictly increasing in $u$ on the readership. Hence, up to measure-zero indifferences, $B_n$ is either empty or an interval centered at $t_n^*$. If AI attracts no consumers, each producer's readership has radius $h$; an empty $B_n$ gives Case 1, and a nonempty $B_n$, of radius $\hat{\delta}\in(0,h]$, gives Case 2. If AI attracts a positive share of consumers, each producer's readership has radius $\delta^*<h$. At its boundary, $\cos\delta^*>\cos(2h-\delta^*)$, so the producer strictly outperforms the best rival. The marginal consumer's indifference therefore gives $v_{AI}=v_{in}>V_{i,-n}$. Since the best rival value increases with distance from the target, AI is the binding outside option throughout $R_n$, giving Case 3. Thus, the three cases are exhaustive within the maintained equilibrium class.

Near the equilibrium profile, producer $n$'s value of information for consumers in its readership is
$$
v_{in}=\lambda\left[\gamma\sqrt{x_n}+\sqrt{1-x_n}\cos(t_i-t_n)\right].
$$
Differentiating with respect to $x_n$ gives
$$\frac{\partial v_{in}}{\partial x_n} = \lambda\left[\frac{\gamma}{2\sqrt{x_n}} - \frac{\cos(t_i-t_n)}{2\sqrt{1-x_n}}\right].$$
We can also derive that 
$$\frac{\partial v_{AI}}{\partial x_n} = \frac{\lambda\sqrt{2(\gamma^2+1)}}{N}\alpha_x(\bar{x};\eta).$$

Differentiating producer $n$'s equilibrium profit with respect to $x_n$ therefore gives
\begin{equation}\label{eq:endgenousprofit}
    \frac{\partial\Pi_n}{\partial x_n} = \frac{1}{2\pi}\int_{R_n}\frac{\partial v_{in}}{\partial x_n}\,dt_i - \frac{1}{2\pi}\int_{B_n}\frac{\partial v_{AI}}{\partial x_n}\,dt_i = \frac{1}{2\pi}\int_{R_n}\frac{\partial v_{in}}{\partial x_n}\,dt_i - \frac{\lambda\sqrt{2(\gamma^2+1)}}{2\pi N}|B_n|\,\alpha_x(\bar{x};\eta),
\end{equation}
where $|B_n|$ denotes the Lebesgue measure of $B_n$.\footnote{The boundary terms from Leibniz's rule vanish: at a readership boundary, the producer's margin is zero, and at a boundary where the binding alternative switches, the two outside-option values coincide.} The first term captures the direct effect of vertical emphasis on consumers' value for content producer $n$'s content. The second term is new relative to the main model: increasing $x_n$ improves the original content provided by content producers, raises AI reliability, and thereby strengthens the outside option of consumers for whom AI is the binding alternative.

Suppose the AI synthesizer does not bind any consumer's outside option. Then $B_n=\varnothing$ for every content producer. The second term in Equation (\ref{eq:endgenousprofit}) therefore vanishes, and the producer’s editorial first-order condition at this equilibrium is identical to that in the benchmark without AI. Hence, $x^*=x_0^*$. Because AI does not bind consumers' outside options, equilibrium prices and information acquisition choices also coincide with those in the no-AI benchmark. This establishes part 1.

Suppose all consumers continue to obtain information from content producers, but AI is the binding outside option for consumers within distance $\hat{\delta}\in(0,h]$ of each producer's target direction. The consumer at the boundary between the region in which AI binds and the region in which the adjacent producer binds is indifferent between these two outside options. Under symmetry, this consumer is at distance $2h-\hat{\delta}$ from the adjacent producer's target. Equating the value of the adjacent producer with the value of AI gives
$$\sqrt{2(\gamma^2+1)}\,\alpha(x^*;\eta) = \gamma\sqrt{x^*} + \sqrt{1-x^*}\cos(2h-\hat{\delta}).$$

Each content producer continues to serve an audience of radius $h$, so $R_n=[t_n^*-h,t_n^*+h]$. Moreover, AI is the binding outside option over an interval of total length $|B_n|=2\hat{\delta}$. Thus,
\begin{equation}\label{eq:case2endogenous}
    \int_{R_n}\frac{\partial v_{in}}{\partial x_n}\,dt_i = \lambda\int_{-h}^{h}\left[\frac{\gamma}{2\sqrt{x^*}} - \frac{\cos u}{2\sqrt{1-x^*}} \right]du\\ = \lambda \left[\frac{\gamma h}{\sqrt{x^*}} - \frac{\sin h}{\sqrt{1-x^*}}\right].
\end{equation}

Substituting Equation (\ref{eq:case2endogenous}) and $|B_n|=2\hat{\delta}$ into the first-order condition from Equation (\ref{eq:endgenousprofit}), and canceling common terms, gives
\begin{equation}\label{eq:case2condition2endo}
    \frac{\gamma h}{2\sqrt{x^*}} - \frac{\sin h}{2\sqrt{1-x^*}} = \frac{\sqrt{2(\gamma^2+1)}\,\hat{\delta}}{N}\alpha_x(x^*;\eta).
\end{equation}

It remains to establish the comparison with $x_0^*$. Define
$$D_h(x) \equiv \frac{\gamma h}{2\sqrt{x}} - \frac{\sin h}{2\sqrt{1-x}}.$$
The no-AI equilibrium satisfies $D_h(x_0^*)=0$.
Moreover,
$$\frac{\partial D_h(x)}{\partial x} = -\frac{\gamma h}{4x^{3/2}} - \frac{\sin h}{4(1-x)^{3/2}} <0,$$
so $D_h(x)$ is strictly decreasing in $x$.

By Assumption~\ref{ass:sourcereliability}, the right-hand side of Equation~\eqref{eq:case2condition2endo} is nonnegative. Hence, $D_h(x^*)\geq0$. Since $D_h$ is strictly decreasing and $D_h(x_0^*)=0$, it follows that $x^*\leq x_0^*$. If $\alpha_x(x^*;\eta)>0$, the right-hand side of Equation~\eqref{eq:case2condition2endo} is strictly positive. Therefore, $D_h(x^*)>0$, which, by strict monotonicity of $D_h$, implies $x^*<x_0^*$.
This establishes part 2.

Suppose now that the AI synthesizer attracts a positive share of consumers. Let $\delta^*\in(0,h)$ denote the radius of each producer's retained audience. The marginal consumer at distance $\delta^*$ from the producer's target is indifferent between obtaining information from the producer and from AI. Therefore,
$$\gamma\sqrt{x^*} + \sqrt{1-x^*}\cos\delta^* = \sqrt{2(\gamma^2+1)}\alpha(x^*;\eta).$$

Because AI is the binding outside option for all consumers retained by the producer, $R_n=B_n = [t_n^*-\delta^*,t_n^*+\delta^*]$, and hence $|R_n|=|B_n|=2\delta^*$. Therefore,
$$\int_{R_n}\frac{\partial v_{in}}{\partial x_n}\,dt_i = \lambda\int_{-\delta^*}^{\delta^*}\left[\frac{\gamma}{2\sqrt{x^*}} - \frac{\cos u}{2\sqrt{1-x^*}} \right]du = \lambda\left[\frac{\gamma\delta^*}{\sqrt{x^*}} -
\frac{\sin\delta^*}{\sqrt{1-x^*}}\right].$$
Substituting this equation and $|B_n|=2\delta^*$ into Equation~\eqref{eq:endgenousprofit} and imposing the first-order condition yields
$$\frac{\gamma\delta^*}{2\sqrt{x^*}} - \frac{\sin\delta^*}{2\sqrt{1-x^*}} =\frac{\sqrt{2(\gamma^2+1)}\,\delta^*}{N}\alpha_x(x^*;\eta),$$
which is the second equilibrium condition in part 3.

To establish $x^*<x_0^*$, for any $z\in(0,h]$ define
$$D_z(x) \equiv \frac{\gamma z}{2\sqrt{x}} - \frac{\sin z}{2\sqrt{1-x}}.$$
As above, $\partial D_z(x)/\partial x <0$. The unique solution to $D_z(x)=0$ is $X(\gamma,z) \equiv (\gamma^2z^2)/(\gamma^2z^2+\sin^2z)$. Similar to the logic in part 2, $D_{\delta^*}(x^*)\geq0$. Because $D_{\delta^*}$ is strictly decreasing and $D_{\delta^*}\bigl(X(\gamma,\delta^*)\bigr)=0$, we obtain $x^* \leq X(\gamma,\delta^*)$.

Next, $X(\gamma,z)$ is strictly increasing in $z$ for $z\in(0,h]$. Since AI adoption implies $\delta^*<h$, we have $X(\gamma,\delta^*) < X(\gamma,h) = x_0^*$. We thus obtain $x^*<x_0^*$. This establishes part 3 and completes the proof.

\end{document}